\title[Theoretical Study of Tabled CLP]{%
  A Theoretical Study of (Full) Tabled Constraint Logic Programming
  \thanks{
    Work partially supported by EIT Digital, MINECO
    project TIN2015-67522-C3-1-R (TRACES), MICINN project
    PID2019-108528RB-C21 (ProCode), and Comunidad de Madrid project
    S2018/TCS-4339 BLOQUES-CM co-funded by EIE Funds of the European Union.
  }
}
\author[Joaqu\'in Arias and Manuel Carro]{%
Joaqu\'in Arias$^{1,2}$ \and Manuel Carro$^{1,3}$ \\
$^1$IMDEA Software Institute, $^2$Universidad Rey Juan Carlos, $^3$Universidad Polit\'ecnica de Madrid\\
\texttt{\emph{joaquin.arias@urjc.es, manuel.carro@\{imdea.org,upm.es\}}} 
}
\DeclareMathAlphabet{\pazocal}{OMS}{zplm}{m}{n}
\newcommand{\domain}{\ensuremath{\pazocal{D}}\xspace}
\newcommand{\mf}{\ensuremath{\pazocal{F}}\xspace}
\newcommand{\R}{$\mathds{R}$\xspace}
\newcommand{\Q}{$\mathds{Q}$\xspace}
\newcommand{\h}{$\mathds{H}$\xspace}
\newcommand{\lfp}{\ensuremath{\mathit{lfp}}\xspace}
\setlist[enumerate,2]{ref=\arabic{enumi}.\alph*}
\newcommand{\bigsymbol}[1]{\mathlarger{\mathlarger{#1}}}
\newcommand{\code}{\lstinline[style=MyInline]}
\lstdefinestyle{tree}
{
  basicstyle = \small\ttfamily\color{PrologPredicate},
  basewidth = 0.5em,
  moredelim = {[s][\color{PrologString}]{ \{}{\} }},
  moredelim = {*[s][{\color{PrologVar}}]{(}{)}},
  literate     =
  {.\\=.}{{\ \char"5C=\ }}3
  {\\=}{{\ \char"5C=\ }}3
  {.<.}{{\ \#<\ }}4
  {.>.}{{\ \#>\ }}4
  {=}{{\ =\ }}3
  {.=.}{{\ \#=\ }}3
  {.=<.}{{\ \#=<\ }}5
  {.>=.}{{\ \#>=\ }}5
}
\lstdefinestyle{MyInline}
{
  basicstyle = \ttfamily\color{PrologOther},
  breaklines = true,
  breakatwhitespace=true,
  upquote = true,
  literate =
  {,}{}{1\discretionary{,}{}{,}}
  {|}{{\color{PrologOther}$\mid$}}1
  {\\\{}{{\color{PrologOther}\{}}1
  {\\\}}{{\color{PrologOther}\}}}1
  {[}{{\color{PrologOther}\small[}}1
  {]}{{\color{PrologOther}\small]}}1
  {\\$}{{\$}}1
  {.=.}{{\color{PrologOther}\,\#=\,}}3
  {.<.}{{\color{PrologOther}\,\#<\,}}3
  {.>.}{{\color{PrologOther}\,\#>\,}}3
  {.=<.}{{\color{PrologOther}\,\#=<\,}}4
  {.>=.}{{\color{PrologOther}\,\#>=\,}}4
  {==}{{\color{PrologOther}\,==\,}}3
%  {=}{{\color{PrologOther}\,=\,}}2
  {<}{{\color{PrologOther}\,<\,}}2
  {>}{{\color{PrologOther}\,>\,}}2
  {=<}{{\color{PrologOther}\,=<\,}}3
  {>=}{{\color{PrologOther}\,>=\,}}3
  {.\\=.}{{\color{PrologOther}{\char"5C}=}}3
  {\\=}{{\color{PrologOther}\char"5C=}}3
  {?-}{{\color{black}\;?\vspace{-.5em}-\,}}4
  {:-}{{\color{black}:\vspace{-.5em}-\,}}3
}
\lstdefinestyle{MyProlog}
{
  keywords = {},
  upquote = true,
  basicstyle = \relsize{-1}\ttfamily\color{PrologPredicate},
  basewidth = 0.48em,
  moredelim = {**[s][\color{PrologString}]{'}{'}},
  moredelim = {**[is][\color{PrologString}]{"}{"}},
  moredelim = {**[is][\color{PrologComment}]{`}{`}},
  moredelim = {**[is][\color{PrologPredicate}]{@}{@}},
  moredelim = {*[s][\color{PrologVar}]{(}{)}},
  moredelim = {*[s][\color{PrologOther}]{:-}{.}},
  moredelim = {*[s][\color{red}]{/*}{*/}},
  commentstyle = \mdseries\color{PrologComment},
  morecomment=[l]\%,
  morecomment=[s]{/*}{*/},
  literate     =
%  {|}{{\color{PrologOther}$\mid$}}1
  {[}{{\color{PrologOther}\small[}}1
  {]}{{\color{PrologOther}\small]}}1
  {\\$}{{\$}}1
  {&(}{{\color{PrologOther}(}}1
  {&)}{{\color{PrologOther})}}1
  {&.}{{.}}0
  {.=.}{{\color{PrologOther}\ \#=\ }}4
  {.<.}{{\color{PrologOther}\ \#<\ }}4
  {.>.}{{\color{PrologOther}\ \#>\ }}4
  {.=<.}{{\color{PrologOther}\ \#=<\ }}5
  {.>=.}{{\color{PrologOther}\ \#>=\ }}5
  {.\\=.}{{\color{PrologOther}\char"5C=}}3
  {\\=}{{\color{PrologOther}\char"5C=}}3
%  {=}{{\color{PrologOther}=}}2
  {,}{{\color{PrologOther}\footnotesize,}}1
  {;}{{\color{PrologOther}\footnotesize;}}1,
}
\lstdefinestyle{MyASP}
{
  keywords = {},
  upquote = true,
  basicstyle = \relsize{-1}\ttfamily\color{PrologPredicate},
  basewidth = 0.52em,
  moredelim = {*[s][\color{PrologOther}]{:-}{.}},
  moredelim = {*[s][\color{PrologOther}]{\{}{\}}},
  moredelim = {[s][\color{PrologVar}]{(}{)}},
  commentstyle = \color{PrologComment},
  morecomment=[l]\%,
  morecomment=[l]?,
  literate     =
  {..}{..}2
  {&(}{{\color{PrologOther}(}}1
  {&)}{{\color{PrologOther})}}1
  {,}{{\footnotesize,}}1
}
\definecolor{PrologPredicate}{RGB}{0,0,200}
\definecolor{PrologVar}      {RGB}{145,032,039}
\definecolor{PrologComment}  {RGB}{169,082,044}
\definecolor{PrologOther}    {rgb}{0.1,0.1,0.1}
\definecolor{PrologString}   {RGB}{070,120,200}
\newcommand{\ok}{{\color{PrologPredicate}{\checkmark}}\xspace}    
\newcommand{\no}{{\color{PrologVar}{\ensuremath{\boldsymbol{\times}}}}\xspace}
\newtheorem{deff}{Definition}
\newtheorem{lemm}{Lemma}
\newtheorem{thee}{Theorem}
\newtheorem{exaa}{Example}
\newtheorem{corr}{Corollary}
\let\oldc\,
\renewcommand{\,}{\oldc \allowbreak}
\let\oldland\land
\renewcommand{\land}{\oldland \allowbreak}
\newcommand{\goalset}[2]{%
  \ensuremath{\langle #1,\, #2 \rangle}%
}
\newcommand{\goaltwo}[2]{%
  \ensuremath{\langle \{#1\},\, #2 \rangle}%
}
\newcommand{\goal}[1]{%
  \ensuremath{\langle #1 \rangle}%
}
\newcommand{\calltwo}[2]{\ensuremath{( #1,\, #2 )}}
\newcommand{\callcode}[2]{$($\code|#1|, \code|#2|$)$}
\newcommand{\mt}[1]{\ensuremath{\mathtt{#1}}}
\newcommand{\newstate}[3]{
  \ifthenelse{\equal{#2}{\,}}
  { \textbf{#1} $\langle\ \emptyset\ $, \ensuremath{\thickmuskip=0mu \medmuskip=2mu \mathtt{#3}}$\rangle$
    }
  { \textbf{#1} $\langle\{$\texttt{#2}$\}$, \ensuremath{\thickmuskip=0mu \medmuskip=2mu \mathtt{#3}}$\rangle$ }
}
\newcommand{\newstatelarge}[3]{
  { \textbf{#1} $\langle\{$\texttt{#2}$\}$, \\ \hspace{2em} \ensuremath{\thickmuskip=0mu \mathtt{#3}}$\rangle$ }
}
\newcommand{\Cs}{\ensuremath{C_{g}}\xspace}
\newcommand{\As}{\ensuremath{A_{\goaltwo{g}{c_g}}}\xspace}
\newcommand*{\pr}[1]{%
  \ifcase#1%
  \relax
  \or \ensuremath{\mathtt{_1}}\nobreak
  \or \ensuremath{\mathtt{_2}}\nobreak
  \or \ensuremath{\mathtt{_3}}\nobreak
  \or \ensuremath{\mathtt{_4}}\nobreak
  \or \ensuremath{\mathtt{_5}}\nobreak
  \else
  \fi
}
\begin{document}
\maketitle

  \hyphenpenalty 10000 %
  \exhyphenpenalty 10000 %

\begin{abstract}
  Logic programming with tabling and constraints (TCLP, \emph{tabled
    constraint logic programming}) has been shown to be more
  expressive and, in some cases, more efficient than LP, CLP, or LP with
  tabling.
  % Previous designs of TCLP systems did not fully use
  % entailment to determine call/answer subsumption and did
  % not provide a simple and well-documented interface to facilitate the
  % integration of constraint solvers in existing tabling systems.
%
% MCL: Is this necessary?   We're not 'selling' Mod-TCLP
%
%%   Mod~TCLP is the first system that fully implements projection and
%%   entailment to determine call/answer subsumption, providing a simple
%%   and well-documented interface that facilitates the integration of
%%   constraint solvers with the tabling engine.
  %
  In this paper we provide insights regarding the semantics,
  correctness, completeness, and termination of top-down execution
  strategies for full TCLP, i.e., TCLP featuring entailment checking
  in the calls and in the answers.  
  We present a top-down semantics for TCLP and show that it is equivalent
  to a fixpoint semantics.
  We study how the constraints that a program generates can
  effectively impact termination, even for constraint classes that are
  not constraint compact, generalizing previous results.
%% 
  %% compacity of the constraint domains used impact termination and
%% 
  %% generalize previous termination theorems to cover constraint
  %% domains that are not constraint-compac.t
%
  We also present how different variants of constraint projection
  impact the correctness and completeness of TCLP implementations.
%%  termination properties of different
%%   evaluation strategies.
%% 
%%  for the top-down
%%   evaluation on which Mod~TCLP, a complete, modular implementation of
%%   TCLP, is based, and we study the role of entailment and projection
%%   in TCLP systems.
  %
  All of the presented characteristics are implemented (or can be
  experimented with) in Mod~TCLP, a modular framework for Tabled
  Constraint Logic Programming, part of the Ciao Prolog logic
  programming system. % , presented elsewhere.
  % design and an experimental evaluation
  % of \mtclp, a framework which eases this integration.  \mtclp views
  % constraint solvers as clients of the tabling system, which is
  % generic w.r.t.\ the solver and only requires a clear interface from
  % the latter.
  %% with a minimal set of operations (e.g., entailment checking and
  %% projection) which the constraint solver has to provide to the
  %% tabling engine.
  % We validate our design by integrating four constraint solvers: a
  % previously existing constraint solver for difference constraints,
  % written in C; the standard versions of Holzbauer's CLP(\Q) and
  % CLP(\R), written in Prolog; and a new constraint solver for
  % equations over finite lattices.  We evaluate the performance of our
  % framework in several benchmarks using the aforementioned constraint
  % solvers.  \mtclp was developed in Ciao Prolog, a robust, mature,
  % next-generation Prolog system.
  %% and% with the implementation of a simple abstract interpreter whose
  %% fixpoint is reached by means of tabled execution, and whose
  %% domain operations are handled by the constraint over (finite)
  %% lattices,
  %% where TCLP avoids recomputing subsumed abstractions.
%%\mclcomm{to self: Needs further rewriting?}
\end{abstract}

\begin{keywords}
  Constraints, Tabling, Logic programming, Foundations, Implementation.
\end{keywords}

\section{Introduction and Motivation}
\label{sec:introduction}

Constraint Logic Programming (CLP)~\cite{survey94} extends
Logic Programming (LP)
% (CLP)~\cite{jaff87,survey94} is a natural extension of
% Logic which has attracted much attention.  since the late 1980's.
% Stemming from Logic Programming (LP), CLP languages is a new class
% of programming languages which
with variables that can belong to arbitrary constraint domains and
the ability to incrementally solve equations involving these
variables.
%% over arbitrary domains
%% increasing: 
CLP brings additional expressive power to LP, since constraints can
very concisely capture complex relationships.  Also,
shifting from ``generate-and-test'' to ``constraint-and-generate''
patterns reduces the search tree and therefore brings additional
performance, even if constraint solving is in general more expensive
than first-order unification.
%% constraint solvers can automatically state relationships between
%% variables which in LP would have to be explicitly programmed.

Tabling~\cite{tamaki.iclp86,Warren92} is an execution
strategy for logic programs that suspends repeated calls which could
cause infinite loops.  Answers from non-looping branches are
used to resume suspended calls which can, in turn, generate more answers.
Only new answers are saved, and evaluation finishes when no new
answers can be generated.  Tabled evaluation always terminates for
calls/programs with the bounded term depth property (those that can only
generate terms with a finite bound on their depth) and can improve
efficiency for terminating programs that repeat computations, as it automatically
implements a variant of dynamic programming.
%
%% and saves calls and answers in order to reuse them
%% in future calls and to resume suspended calls.  Tabling has several
%% advantages over the SLD resolution strategy: it can avoid some
%% infinite failures and improve efficiency in programs which repeat
%% computations.
%% It can also be extended to evaluate programs with
%% stratified negation~\cite{sagonas98:xsb-abstract-machine} and
%% These characteristics help make logic programs less dependent on the
%% order of clauses and goals in a clause, thereby bringing operational
%% and declarative semantics closer together.
%
%Tabled evaluation 
Tabling has been successfully applied in a variety of contexts, including
deductive databases, program analysis, semantic Web reasoning, and
model
checking~\cite{pracabsin,Dawson:pldi96,zou05:owl-tabling,ramakrishna97:model_checking_tabling,Charatonik02:model_checking_tabulation}.
%% \mclcomm{Does any ot these use constraints + tabling}\mclcomm{Do we
%% mean TCLP or tabling?}

%The interest in combining tabling and CLP stems from the fact that,

%% from the power of tabling: it can enhance their
%% %In other words, tabling can further increase the
%% declarativeness and expressiveness
%% %%  (bringing their operational semantics much closer to the
%% %% logical, declarative semantics) 
%% and in many cases also their
%% efficiency~\cite{tchr2007,bao2000,chico-tclp-flops2012}.
%

% % The combination of CLP and
% % tabling~\cite{toman_theo_const_tabling,tchr2007,bao2000,chico-tclp-flops2012}
% % brings several advantages:  it enhances termination
% % properties, increases speed in a range of programs, and provides
% % additional expressiveness.
% % It has been applied in several areas, including constraint
% % databases~\cite{KanellakisKR95,toman_theo_const_tabling}, verification
% % of timed automata and infinite
% % systems~\cite{Charatonik02:model_checking_tabulation}, % ~\cite{Alur94atheory,BengtssonY03}
% % and abstract interpretation~\cite{toman97:const_data_absint}.

The integration of tabling and constraint solving, Tabled Constraint
Logic Programming (TCLP), makes it possible to exploit their synergy
in several application fields of which we highlight a few:

\begin{description}[noitemsep]
\item [Abstract interpretation:] Tabling can be used naturally to
  compute
  fixpoints~\cite{kanamori93:absint-oldt,janssens98:tabling_for_AI-tapd},
  but, additionally, by implementing abstract domain operations as
  constraints~\cite{arias19:ciaopp-tclp}, entailment will
  automatically detect more particular calls and suspend their
  execution to reuse analysis results from most general calls, thereby
  speeding up the fixpoint computation.  Constraints can also be used
  to state \emph{preconditions} to the analysis results before the
  analysis starts in a powerful yet flexible fashion.  These
  preconditions can propagate during the evaluation and help solve
  some verification problems faster.

\item [Reasoning on ontologies:] An ontology formalizes types,
  properties, and interrelationships among entities.  They can be
  expressed as a lattice constraint system and, with TCLP, evaluation in
  ontologies can benefit from entailment of instances which are more
  particular than other entities, in a fashion similar to OWL
  (\url{www.w3.org/owl}), but in potentially richer domains and/or
  more complex scenarios (e.g., stream data
  analysis~\cite{arias-dc-iclp2016}).

\item [Constraint-based verification:] Verification conditions can be
  encoded as constraint systems, and the tabling engine can use
  entailment to guarantee termination and save execution
  time~\cite{Charatonik02:model_checking_tabulation,jaffar04:timed_automata,navas2013-FTCLP}.
%\mclcomm{Check Joxan's paper}

\item [Incremental evaluation of aggregates:] For aggregates that can
  be embedded into a lattice (e.g., minimum), the aggregation
  operation can be expressed based on the partial order of the
  lattice. In these cases, the aggregate operations in the lattice can
  be seen as a counterpart of the operations among constraints defined
  in TCLP~\cite{atclp-padl2019}.
\end{description}

\begin{figure}
  \centering
  
  \begin{subfigure}[b]{.33\textwidth}
\begin{lstlisting}[style=MyProlog]
dist(X, Y, D) :- 
    dist(X, Z, D1), 
    edge(Z, Y, D2), 
    D is D1 + D2.
dist(X, Y, D) :-
    edge(X, Y, D).

?- dist(a,Y,D), D<K.
\end{lstlisting}
    \caption{LP version.}
  \label{fig:dist-lp}
  \end{subfigure}
  \begin{subfigure}[b]{.33\textwidth}
\begin{lstlisting}[style=MyProlog]
dist(X, Y, D) :- 
    D1.>.0, D2.>.0, 
    D.=.D1+D2, 
    dist(X, Z, D1), 
    edge(Z, Y, D2).
dist(X, Y, D) :- 
    edge(X, Y, D).

?- D.<.K, dist(a,Y,D).
\end{lstlisting}
    \caption{CLP(\R) version.}
  \label{fig:dist-clp}
  \end{subfigure}
  \begin{subfigure}[b]{.32\textwidth}
\begin{lstlisting}[style=MyProlog]
dist(X, Y, D) :- 
    D1.>.0, D2.>.0, 
    D.=.D1+D2, 
    edge(X, Z, D1), 
    dist(Z, Y, D2).
dist(X, Y, D) :- 
    edge(X, Y, D).

?- D.<.K, dist(a,Y,D).
\end{lstlisting}
    \caption{Right-recursive CLP(\R) version.}
  \label{fig:dist-clp-right}
  \end{subfigure}
  
  \caption[Distance traversal in a graph.]{Distance traversal in a
    graph.\\Note: The symbols \code|.>.| and \code|.=.| are
    (in)equalities in CLP(\R).}

\end{figure}

In order to highlight some of the advantages of TCLP vs.\ LP, tabling,
and CLP with respect to declarativeness and logical reading,
in~\cite{TCLP-tplp2019} we compared how different versions of a
program to compute distances between nodes in a graph behave under
these three approaches. Each version was adapted to a different
paradigm, but trying to stay as close as possible to the original
code, so that the additional expressiveness can be solely attributed to the
evaluation strategy rather than to differences in
the code itself. Their behaviors are summarized in
Table~\ref{tab:distance-summary} and explained below:

\begin{itemize}
\item {\bf LP}: 
  The code in Fig.~\ref{fig:dist-lp} is the Prolog version of a
  program used to find the distance between two nodes in a graph.  The
  distance between two nodes\footnote{This is a typical query for the analysis
    of social networks~\cite{swift2010:subsumption}.}
 is calculated by adding variables
  \code{D1} and \code{D2}, corresponding to distances to and from an
  intermediate node, once they are instantiated.
  %
  %% within a distance \code{K} from each other. The
  The figure also shows a query used to determine which node(s)
  \code{Y} is/are within a distance \code{K} from node \code{a}.
  This query does not terminate as left recursion makes the recursive
  clause enter an infinite loop.
  If we convert the program to a right-recursive version by swapping
  the calls to \code{edge/3} and \code{dist/3}, the program will still
  not terminate in a cyclic graph.
  
\item {\bf CLP(\R)}:
  Fig.~\ref{fig:dist-clp} is the CLP(\R) version of the same code where
  addition is modeled as a constraint and placed at the beginning of
  the clause.  Since the total distance \code{D} is bound by the
  constraint \code{D #< K} in the query, the search would be expected
  to be pruned if \code{D} exceeds the maximum distance, \code{K}.
  However, the constraints placed before the recursive call do not
  cause this bound to be violated, and therefore it would enter a loop
  even for graphs without loops.
  The right-recursive version of the CLP(\R) program in
  Fig.~\ref{fig:dist-clp-right} will however finish because the
  initial bound to the distance eventually causes the constraint store
  to become inconsistent, which provokes a failure in the search.
  Note that this transformation is easy in this case, but it would not
  have the same effect should the clause be written with a
  (logically equivalent) double recursion. This is optional in this
  example, but it may be necessary or more natural in other cases,
  such as in parsing applications, language interpreters, algorithms
  on trees, or divide-and-conquer algorithms.

  %% in other cases, such as language interpreters or tree / graph
  %% traversal algorithms,\mclcomm{This one is a graph traversal
  %% algorithm...} left (or double) recursion is much more
  %% natural. \mclcommin{To self: Not very convincing -- rewrite and
  %% clarify what happens (better explained in table)}

\item {\bf Tabling}:
  Tabling records the first occurrence of each call to a tabled
  predicate (the \emph{generator}) and its answers.
  In variant tabling, the most usual form of tabling, when a call
  equal up to variable renaming to a previous generator is found (a
  variant), its execution is suspended, and it is marked as a
  \emph{consumer} of the generator.  For example, \texttt{dist(a,Y,D)}
  is a variant of \texttt{dist(a,Z,D)} if \texttt{Y} and \texttt{Z}
  are free variables.
  When a generator finitely finishes exploring all of its clauses and
  its answers are collected, its consumers are resumed and are fed the
  answers of the generator.  This may make consumers produce new
  answers that will in turn cause more resumptions.
  Tabling is a complete strategy for all programs with the bounded
  term-depth property, which in turn implies that the Herbrand model
  is finite.  Therefore, left- or right-recursive \emph{reachability}
  terminates in finite graphs with or without cycles.  However, the
  program in Fig.~\ref{fig:dist-lp} has an infinite minimum Herbrand
  model for cyclic graphs: every cycle can be traversed an unbound
  number of times, giving rise to an unlimited number of answers with
  a different distance each.  The query \code{?-dist(a,Y,D),D<K} will
  therefore not terminate under variant tabling.

\item {\bf TCLP}:
  The program in Fig.~\ref{fig:dist-clp} can be executed with tabling
  and using constraint entailment to suspend calls which are \emph{more
  particular} than previous calls and, symmetrically, to keep only the
  most general answers returned.
  Entailment can be seen as a generalization of subsumption for the
  case of general constraints; in turn, subsumption was shown to
  enhance termination and performance in
  tabling~\cite{swift2010:subsumption}.
  When a goal $G_1$ entails another goal $G_0$, the solutions for
  $G_1$ are a subset of the solutions for $G_0$.
  To make the entailment relationship explicit, we define a TCLP goal
  as \mbox{\calltwo{g}{c_{g}}} where $g$ is the call (a literal) and
  $c_g$ is the projection of the current constraint store onto the
  variables of the call.
  Then, a goal $G_0 = $\callcode{dist(X,Y,D)}{D<150} is entailed by
  another goal $G_1 = $\callcode{dist(X,Y,D)}{D>0$\ \land\ $D<75} because
  the solutions for \code{D>0$\ \land\ $D<75} are contained in the
  solutions for \code{D<150} %
  (\code{D>0$\ \land\ $D<75$\ \sqsubseteq\ $D<150}), and we write $G_1
  \sqsubseteq G_0$.
  We say that $G_1$, the more particular goal, is the \emph{consumer},
  and $G_0$, the most general goal, is the \emph{generator}.
  The key observation behind the use of entailment in TCLP is that 
 calls to more particular goals can suspend their execution and
  later recover the answers collected by the most general call and
  continue execution.
  %
%%   For example, the goal $G_0 \equiv$ \code{dist(X,Y,D)} subsumes
%%   $G_1 \equiv$ \code{dist(a,Y,D)} because the latter is an instance of
%%   the former ($G_1 \sqsubseteq G_0$).  All the answers for $G_0$ where
%%   \code{X=a} are valid answers for $G_1$; on the other hand, all the
%%   answers for $G_1$ are also answers for $G_0$.
  %
  % We also say that the former (the generator) is more general than
  % the latter (the consumer).
  %
  The solutions for the consumer are a subset of that for the
  generator.
  %
  %% All the solutions of the consumer are solutions of the generator
  %% or, in other words, .
  However, some answers for a generator may not be valid for a
  consumer.  For example, %
  \code{D>125$\ \land\ $D<135} is a solution for $G_0$
  but not for $G_1$, since $G_1$ has a 
  % the consumer call was made under a
  constraint store more restrictive than the $G_0$.  Therefore, the
  tabling engine should check and filter, via the constraint solver,
  that answers from generators are consistent with the constraint
  store of consumers.

\end{itemize}

\begin{table}
  \caption{Termination properties comparison of LP, CLP, tabling and TCLP.}
  \label{tab:distance-summary}
  \centering
  \begin{tabular}{llcccc}
    \toprule
    Graph          &                      & LP     & CLP    & TAB   & TCLP  \\
    \midrule
    Without cycles & Left recursion       & \no    & \no    & \ok   & \ok   \\
                   & Right recursion      & \ok    & \ok    & \ok   & \ok   \\   
    \noalign{\vspace {.25cm}}
    With cycles    & Left recursion       & \no    & \no    & \no   & \ok   \\ 
                   & Right recursion      & \no    & \ok    & \no   & \ok   \\
    \bottomrule
  \end{tabular}
\end{table}

The use of entailment in calls and answers enhances termination
properties. Column ``TCLP'' in Table~\ref{tab:distance-summary}
summarizes the termination characteristics of \code{dist/3} under TCLP, and
shows that a full integration of tabling and CLP makes it possible to
find all the solutions and finitely terminate in all the
cases. Additionally, in~\cite{TCLP-tplp2019} we experimentally show
that Mod~TCLP, a framework that fully implements entailment in
the call and answer entailment phase,  can improve performance.

The theoretical basis of Tabled Constraint Logic Programming (TCLP)
were established in~\cite{toman_theo_const_tabling} using a framework
of bottom-up evaluation of Datalog systems and presenting the basic
operations (projection and entailment checking) that are necessary to
ensure completeness w.r.t.\ the declarative semantics.
In this work, we present the theoretical basis of TCLP for a top-down
execution on which Mod~TCLP~\cite{TCLP-tplp2019} is based.
In Section~\ref{sec:foundations} we present the operational semantics
of a top-down execution of TCLP programs with generic constraint
solvers.
In Section~\ref{sec:theorems-proofs} we extend the soundness,
completeness, and termination proofs.
In Section~\ref{sec:import-prec-impl} we explain the benefits of using
entailment checking with more relaxed notions projections.

\section{Fixpoint and Top-Down Semantics of TCLP}
\label{sec:foundations}

In this section we present a bottom-up fixpoint semantics of TCLP that
used constraint entailment for the answers and a top-down semantics
that extends~\cite{toman_theo_const_tabling} by explicitly modeling
entailment both in the answers and in the calls. This semantics uses
objects that mimic the construction of forests of trees in
implementations of tabling.
%% In Section~\ref{sec:tclp_syntax} we briefly sketch the syntax of
%% (tabled) constraint logic programs, and in
%% Section~\ref{sec:const_solver} we define the properties of a
%% constraint solver required to be used under a TCLP top-down
%% execution~\cite{TCLP-tplp2019}.  In Section~\ref{sec:semantics} we
%% generalize the fixpoint semantics
%% of~\cite{toman_theo_const_tabling,tchr2007} with the
%% S-semantics~\cite{falaschi-vars89,survey94}. In
%% Section~\ref{sec:operational} we define the operational semantics
%% of TCLP, based on the operational semantics of CLP.

\subsection{Syntax of TCLP Programs}
\label{sec:tclp_syntax}

A (tabled) constraint logic program consists of clauses of the form:
$$ h\ \text{:-}\ c,\, l_1,\, \ \dots,\,\ l_k. $$
where
% $n \geq 0$,
$h$ is an atom, $c$ is an atomic constraint or conjunction of constraints, 
$l_i$
% with $i < n$ 
are literals, `:-' represents the logical  implication `$\leftarrow$',
and `,' represents the logical conjunction
`$\land$'.
The head of the clause is $h$ and the rest is called
the body, denoted by $body(h)$.
% The clauses where the body is always true,
% $ h\ \text{:-}\ true$, are called facts and are usually written as $h.$, omitting
% the body.
% denoted by its head
%We will use $L$ to denote the set of $l_i$ in a clause.
We
will assume throughout this paper that the program has been rewritten
so that clause heads are linearized (all the
variables are different) and all head unifications
take place in $c$.  The constraint $c$ or the literals $l_i$ or both
may be absent. In the last case the rule is called a fact and it is
customarily written omitting the body.
% such that $vars(c) \subseteq vars(h) \cup vars(L)$
We will assume that we are dealing with \emph{definite programs},
i.e., programs where the literals in the body are always positive
(non-negated) atoms.
%% \emph{Normal programs} % \mclcomm{(for me) Check}
%% require a different treatment.

A query to a TCLP program is a clause with the head \emph{false}, usually written
$\text{?-}\ c_q,\ q$, 
% where $n \geq 0$, 
where $c_q$ is an atomic constraint or a conjunction of constraints and $q$ 
% with $i < n$ 
is a literal.\footnote{This covers as well the case of a conjunction of
  literals since we can always add a rule to that effect to the
  program.}
%\jahcomm{It is required a footnote?}
%such that $vars(c_q) \subseteq vars(Q)$
%% \mclcomm{This may make things easier, but is it necessary?}
%% \jahcomm{No}
%We denote the set of $q_i$ as $Q$.
%
% As in Prolog (LP), $h$ is called the \texttt{head} of the clause and
% the rest is called the \texttt{body}. The neck symbol (\texttt{:-})
% means ``is implied by'' (\texttt{$\leftarrow$}) and the comma
% (\texttt{,}) is the conjunction and it correspond with logical
% conjunction (\texttt{$\land$}) in their semantics.
%

%% The literals are either an atom $a$ (positive literal) or a negated
%% atom $\lnot a$ (negative literal).
%% % A \texttt{definite program}
%% % consists of clauses where all the literals in the body are positive
%% % (\texttt{definite clause}) and a \texttt{normal program} has
%% % \texttt{normal clause} which may contain negative literals.
%% The presence of negative literals together with constraints leads
%% often to non-termination~\cite{toman_theo_const_tabling}, therefore in
%% this paper we will only consider programs with positive literals
%% (definite programs).
%% 

% Let $\domain$ be a constraint domain, we assume that the
% conjunction of constraints $C \in \domain$. In general assuming a
% left to right evaluation of the body, the constraints are placed
% leftmost to reduce the variable's domain pruning the search
% space.
% However, that is not always possible, for example a unit
% two-variable-per-inequality (UTVPI) constraint has to be place in
% such a way that only two of its variables are still free (see
% Sec.~\ref{sec:fibonacci} for details).

\subsection{Constraint Solvers}
\label{sec:const_solver}

We follow~\cite{survey94} in this section.
Constraint logic programming introduces constraint solving methods in
logic-based programming languages.
% with backtracking. 
During the evaluation of a CLP program, the \emph{inference engine}
generates constraints whose consistency with respect to the current
constraint store are checked by the \emph{constraint solver}.  If the
check fails, the engine backtracks to a previous choice and takes a
pending, unexplored branch of the search tree.  In the next sections we will
review the fixpoint and operational semantics of CLP and will extend
them to TCLP.

%%  are added to the constraint store, which
%% is checked for consistency by the constraint solver.  In a typical
%% implementation, alternative clauses which can resolve with a call
%% generate branches in a search tree which can be backtracked to in
%% order to explore the solution space
%% 
%% sets the choice points and restores the state of the system when it
%% backtracks, and the \texttt{constraint solver} adds new variables
%% (belonging to arbitrary constraint domains) and constraints to the
%% constraint store, and checks that, in every state of the evaluation,
%% the constraint store is consistent.

\begin{deff}
\label{def:constraint-solver}
  A \emph{constraint solver}, CLP($\pazocal{X}$), is a (partial)
  executable implementation of a \emph{constraint domain}
  $(\domain, \pazocal{L})$. The parameter $\pazocal{X}$ stands for
  the 4-tuple ($\Sigma$, $\domain$, $\pazocal{L}$, $\pazocal{T}$)
  where:

  \begin{itemize}[wide=0.5em, leftmargin =*, nosep, before =
    \leavevmode\vspace{-\baselineskip}]
  \item[--] $\Sigma$ is a signature which determines the predefined
    predicates and function symbols and their arities.
  \item[--] $\domain$ is a $\Sigma$-structure: the constraint domain over
    which the computation is performed.
  \item[--] $\pazocal{L}$ is the class of $\Sigma$-formulas: the class
    of constraints that can be expressed with $\Sigma$. It should be
    closed under variable renaming, conjunction, and existential
    quantification. % \jahcomm{From \cite[pag 510]{survey94}}
\item[--] $\pazocal{T}$ is a first-order $\Sigma$-theory: an
  axiomatization of the properties of $\domain$, which determines
  what constraints hold and what constraints do not hold.
  $\domain$ and $\pazocal{T}$ should agree on satisfiability of
  constraints, and every unsatisfiability in $\domain$ has to be
  detected by $\pazocal{T}$, i.e., for every constraint
  $c \in \pazocal{L}$, $\domain \vDash c$ iff
  $\pazocal{T} \vDash c$.

\end{itemize}

\end{deff}

A constraint can be an atomic constraint or a conjunction of (simpler)
constraints.  We denote constraints with lower case letters, e.g.\
$c$, and sets of constraints with uppercase letters, e.g.\ $S$.

\begin{exaa}
  \leftskip=20pt\parindent=0pt%
  ~\\
  The Herbrand domain
  CLP(\h) % It was CLP(FT), but we call it CLP(H) later
  used in logic programming is the constraint domain over finite
  trees, where $\Sigma$ contains constants, function symbols, and the
  predicate \code{=/2}; $\domain$ is the set of finite trees, where
  each node is labeled by a constant (if it does not have children) or
  a function symbol of arity $n$ (if it has $n$ children).
  $\pazocal{L}$ is the set of constraints generated by the primitive
  constraints (i.e., equality) between trees (terms).  Typical
  constraints are \code{X=g(a)} and %
  \code{X=f(Z,Y)$\;\land\;$Z=a}.
\end{exaa}

%% Valuation
\begin{deff}[Valuation]\label{def:valuation}
  Let $S = \{X_1, \dots, X_n\}$ be a set of variables.  A
  \emph{valuation} $v$ is a mapping from variables in $S$ to values in
  $\domain$. We write
  $v = \{X_1 \mapsto d_1, \dots, X_n \mapsto d_n\}$ to indicate that
  the value $d_i$ is assigned to variable $X_i$.  
  % Let $C$ be a constraint and
  % $vars(C) = \{x_1, \dots, x_n\}$ the set of variables occurring in
  % $C$, $v(C)$ is $V = $, then $v$is a mapping from variables to values
  % of $\domain$. The valuation of a constraint store \texttt{C},
  % denoted by $v(C)$, it is a variable substitution of \texttt{vars(C)}
  % onto values of $\domain$.
\end{deff}

For convenience, and where it is not ambiguous, we will denote the
value $d_i$ assigned to a variable $X_i$ by the valuation $v$ as
$v(X_i)$ (e.g., $X_i \mapsto d_i \in v$).  Likewise, for a literal $l$
we will denote by $v(l)$ the literal obtained by substituting the
variables in $l$ for their associated values in the valuation $v$ (for
those variables that appear in $v$) and, for a constraint $c$, we define
similarly $v(c)$.

%% Solution
\begin{deff}[Solution of a constraint]
  \label{def:solution}
  Let $c$ be a constraint, $vars(c)$ the set of variables occurring in
  $c$, and $v$ a valuation over $vars(c)$ on the constraint domain
  $\domain$. Then $v$ is a \emph{solution} for the constraint $c$ if
  $v(c)$
  %% , obtained by replacing each variable in $c$ by the value
  %% assigned to it by $v$,
  holds in the constraint domain.
\end{deff}

\begin{deff}[Projection]
  \label{def:projection}
  Let $c$ be a constraint, $S \subseteq vars(c)$ a set of
  variables occurring in $c$, and $T = vars(c) \, \backslash \, S$ the
  rest of the variables of $c$.  The projection of $c$ over $S$,
  denoted $Proj(c, S)$, is another constraint $c_s$ such that
  $c_s \equiv \exists T \cdot c$, i.e.:

  \begin{itemize}[wide=0.5em, leftmargin =*, nosep, before =
    \leavevmode\vspace{-\baselineskip}]
  \item [--] Any solution $v_s$ for $c_s$ can be extended to be a solution
    for $c$.
  \item [--] Any solution $v$ for $c$ can be restricted to the variables in
    $S$ and the restricted valuation is a solution for $c_s$.
  \end{itemize}
\end{deff}

The minimal set of operations that we expect a constraint solver to
support, in order to interface it successfully with a tabling
system~\cite{TCLP-tplp2019}, are:
\begin{itemize}[noitemsep]
  
\item Test for consistence or satisfiability. A constraint $c$ is
  consistent in the constraint domain $\domain$, denoted $\domain \vDash c$, if
  it has a solution in $\domain$.

\item Test for entailment ($\sqsubseteq_{\domain}$).\footnote{We may
    omit the subscript \domain if there is no ambiguity.}  We say that
  a constraint $c_0$ is entailed by another constraint $c_1$
  ($c_0 \sqsubseteq_{\domain} c_1$) if any solution of $c_0$ is also a
  solution of $c_1$.
  We extend the notion of constraint entailment to a set of constraints: a set of
  constraints $C_0$ is entailed (or covered) by another set of
  constraints $C_1$ (and we write it as  $C_0 \sqsubseteq_{\domain}
  C_1$) if
  $\forall c_{i}\in C_0 \, \exists c_{j}\in C_1 . c_{i}
  \sqsubseteq_{\domain} c_{j}$. %

\item An operation to compute the projection of a constraint $c$ onto
  a finite set of variables $S$.
% to obtain a constraint $c_S$ involving only variables
%   in $S$ such that: any solution of $c$ is also a solution of $c_S$,
%   and a valuation $v$ over $S$ which is a solution of $c_S$ is a
%   partial solution of $c$ (i.e., there exists an extension of $v$
%   which is a solution of $c$).  We denote the projection as
  $Proj(S,c)$.
  % In general, the projection can be defined as the disjunction of
  % valuations over $S$ which covers all the partial solutions of $c$,
  % denoted as $Proj(S,c) = \bigvee_{i}\ v_i$ such as
  % $\forall i, \domain \vDash v_i(c)$.%
%%   \mclcomm{Is this necessary here?  Is this an alternative definition
%%     of projection?  Can it represented in $\pazocal{L}$}%
%%   \jahcomm{I send it to call abstraction/answer enumeration option}%
  % However this projection can not be computed if the set of solutions
  % is infinite and its use during the evaluation increases the search
  % branches. Therefore we are interested in the projection which
  % generates the simplest constraint.

\end{itemize}

\subsection{Fixpoint Semantics}
\label{sec:semantics}

The canonical model of a Prolog program is the minimal Herbrand model.
Similarly, the fixpoint semantics of a CLP program $P$ over a
constraint domain $\domain$ is the least % $\domain$-model /
$\domain$-S-model, which we define next.  The presence of
variables in $\domain$-S-models makes it possible to use
entailment to discard subsumed constraints in the bottom-up
construction of the fixpoint.

%% \mclcommin{I wonder if we really need the classical fixpoint semantics
%%   with ground literals, and whether we cannot do away with it?}

% \begin{deff}[$\domain$-interpretation]
%   Let $L$ be a set of literals.  A $\domain$-interpretation is a set
%   of the valuations for the literals in $L$ on the constraint domain
%   $\domain$: $\domain$-interpretation$(L) = \{v(l)\ \mid\ l \in L$ is
%   a literal, $v$ is a valuation on $\domain \}$ It is a subset of the
%   Herbrand base $B$.
% \end{deff}

% \begin{deff}[$\domain$-model]
%   \label{def:model}
%   Let $P$ be a program.  A $\domain$-model of $P$ is a
%   $\domain$-interpretation for the literals of $P$ such that all the
%   clauses of $P$ are consistent on $\domain$ under this
%   interpretation. %
%%    \mclcommin{Note to self: give a logical semantics for clauses so that
%%      we can speak of ``consistency''. }
% \end{deff}

We can define the least $\domain$-S-model of a program using the
S-semantics~\cite{falaschi-vars89,survey94} for languages with
constraints~\cite{gabbrielli1991modeling}.  It differs from the
standard model~\cite{EmKo76} essentially due to the presence of
variables in interpretations and models.

\begin{deff}[$\domain$-S-interpretation]
  Let the pair $(l,c)$ be a constraint literal, where $l$ is a literal
  and $c \in \domain$ an atomic constraint or a conjunction of
  constraints such that $vars(c) \subseteq vars(l)$. A
  $\domain$-S-interpretation is a set of constraint literals.
\end{deff}

\begin{deff}[$\domain$-S-model]
  \label{def:s-model}
  Let $P$ be a program.  A $\domain$-S-model of $P$ is a
  $\domain$-S-interpretation that is logically consistent with the
  clauses in $P$.
%   $\domain$ under this interpretation.
\end{deff}

% The CLP fixpoint semantics (resp., S-semantics) is defined as usual as
% the smallest fixpoint of the immediate consequence operator
% $T_P^{\domain}$ (resp., $S_P^{\domain}$) where all the
% operations behave as defined in the constraint domain $\domain$.
% Function $T_P^{\domain}$ maps $\domain$-interpretations onto
% $\domain$-interpretations:
The CLP fixpoint S-semantics is defined as the smallest
fixpoint of the immediate consequence operator, $S_P^{\domain}$, where
all the operations behave as defined in the constraint domain
$\domain$.

% \begin{deff}[Operator $T_P^{\domain}$~\cite{survey94}]
%   Let $P$ be a CLP program and $I$ a
%   $\domain$-interpretation. The immediate consequence
%   operator $T_P^{\domain}$ is as usual defined as:

%   \vspace*{0.5em}
%   \begin{tabular}{ll}
%     $T_P^{\domain}(I) = I \ \cup \  \{\ v(h)\ \mid$
%     & $h$ :- $c_h$, $l_1$, $\dots$, $l_k$ is a clause of $P,$ \\
%     & $v$ is a valuation on $\domain$ s.t.\ $\domain \vDash v(c_h),$ \\
%     & $v(l_i) \in I,\ 0 < i \leq k\ \}$ \\
%   \end{tabular}
% \end{deff}

% The function $S_P^{\domain}$ does the same over
% $\domain$-S-interpretations:

\begin{deff}[Operator $S_P^{\domain}$~\cite{falaschi-vars89,toman_theo_const_tabling}]
\label{def:S_p}
  Let $P$ be a CLP program and $I$ a
  $\domain$-S-interpretation. The immediate consequence operator
  $S_P^{\domain}$ is defined as:

  \vspace*{0.5em}
  \begin{tabular}{ll}
    $S_P^{\domain}(I) = I \ \cup \ \{\ (h,c)\ \mid$
    & $h$ :- $c_h$, $l_1$, $\dots$, $l_k$ is a clause of $P,$ \\
    & $(a_i, c_i) \in I,\ 0 < i \leq k,$ \\
    & $ c' = Proj(vars(h),\ c_h\ \land\ \bigwedge_{i=1}^k (a_i = l_i\ \land\ c_i)),$\\
    & $\domain \vDash c',$ \\
    & $ \text{if } c' \sqsubseteq c'' \text{ for some } (h,c'') \in I
      \text{ then } c = c'' \text{ else } c = c'\ \} $\\
  \end{tabular}
\end{deff}

Note that
% , unlike the $T_P^\domain$ operator,
$S_P^\domain$ may not
add a pair \emph{(literal, constraint)} when a constraint more general
is already present in the interpretation being enlarged.  However, to
guarantee monotonicity, it does not remove existing more particular
constraints.  The operational semantics of TCLP
(Definition~\ref{def:tclp-forest}) will do that. %

\subsection{Operational Semantics of TCLP}
\label{sec:operational}

In this section we first present a top-down semantics for CLP without
tabling/suspension~\cite{survey94} and then we extend it to capture
the operational semantics
%% for CLP programs under a top-down execution scheme~\cite{survey94}
%% to
of TCLP.  The operational semantics is given in terms of a transition
system that computes the least model defined by the CLP fixpoint
semantics (Section~\ref{sec:semantics}). The evaluation of a query is
a sequence of steps from the initial state to a final state.

\begin{deff}
  \label{def:state}
  A \emph{state} is a tuple \goal{R,c} where:
  
  \begin{itemize}[wide=0.5em, leftmargin =*, nosep, before =
    \leavevmode\vspace{-\baselineskip}]
  \item[--] $R$, the resolvent, is a multiset of literals and
    constraints that contains the collection of as-yet-unseen literals and
    constraints of the program.

    % For brevity, when the set is a
    % singleton we will write its only element,
    % % using a lowercase letter instead of an uppercase letter,
    % e.g., $t$ instead of $\{t\}$.
  \item[--] $c$, the constraint store, is an atomic constraint or a
    conjunction of constraints. It is acted upon by the solver.
  \end{itemize}
\end{deff}

In~\cite{survey94}
%% , where  the transitions in the transition
%% system are detailed,
the constraint store is divided into a
collection of \emph{awake} constraints and a collection of
\emph{asleep} constraints.  This separation is ultimately motivated by
implementation issues and we will not make this distinction here.

%The initial state of the transition system
Given a query \calltwo{q}{c_q},
% (corresponding to $\text{?-}\ c_q,\ q_1,\ \dots,\ q_k$)
the initial state of the evaluation is \goaltwo{q}{c_q}.
%
% \mclcommin{We do not say what $Q$ is (a set?).  We can restrict Q to
%   be a single literal and assume there is  a program clause for $Q :-\
%   c_q,\ q_1,\ \dots,\ q_k.$ That seems to simplify the construction of
%   the tree.}
%
Every transition step between states resolves literals of the
resolvent against the clauses of the program
%\mclcomm{(for me) check answer resolution?}
and adds constraints to the constraint store.
A derivation is \emph{successful} if it is finite and the final
state has the form \goalset{\emptyset}{c} (i.e., the resolvent becomes
empty). The answer for the query
is $Proj(vars(q), c)$.

% An answer for the query $Q$ is the conjunction of constraints $C$
% which is a conjunction of variable substitution of \textsf{vars(Q)}
% onto terms and/or values of the constraint domain $\domain$.

% In \cite{toman_theo_const_tabling} has been proved that the top-down
% execution of a CLP program using tabling guarantees the soundness,
% completeness and termination properties for all queries for which the
% bottom-up evaluation terminates. The tabling engine uses the
% projection and entailment relation of the constraint solver to detect
% entailed goals and suspend it execution avoiding loops. 

As it is customary, we assume that the transitions due to constraint
handling are deterministic (there is only one possible children per
node), while the transitions due to literal matching may be
non-deterministic (there are as many children as clauses whose head
matches some literal in the resolvent).  
% This restriction, which is in practice enforced, is not strictly
% necessary but simplifies the presentation.
As a result, query evaluation takes the shape of 
a search tree, constructed following Def.~\ref{def:clp-tree}.
%% The (non-deterministic) transitions are represented as
%% %% in a system with not deterministic decision branches is
%% a tree. In Def.~\ref{def:clp-tree} we define the tree generated during
%% the evaluation of a query to a CLP program.  , . 
The order in which literals are selected is not relevant.  In
practice, implementations would use a \emph{computation rule} that is
in charge of deciding the new constraint/literal to be resolved among
the set of pending literals.  A common rule is to follow the
left-to-right order in which literals are written in the body of
clauses.
%% This is a simple rule that gives the programmer the chance to have
%% some control on the literals to be chosen.

%% .  \mclcommin{Not sure this is needed (and we
%%   should say then that the order of rules is decided by the search
%%   rule). What is important is to say that none of these rules is important.}
%%
%%
%%
%% See Appendix A of the Supplementary Material
%% of~\cite{TCLP-tplp2019} for example of CLP-trees following a
%% depth-first search strategy where the computation rule selects
%% constraints and literals from the resolvent from left to right.

%% \mclcommin{Remember to mention variable renaming where appropriate}

In what follows we will assume that variables in clauses are renamed
apart before they are used in order to avoid clashes with existing
variable names.

\begin{deff}[CLP tree]
  \label{def:clp-tree}
  Let $P$ be a CLP definite program and \calltwo{q}{c_q} a query. A
  CLP tree of \calltwo{q}{c_q} for $P$, denoted by
  $\bigsymbol{\tau}_P(q,c_q)$, is a tree such that:

  \begin{enumerate}[wide=0.5em, leftmargin =*, nosep, before =
    \leavevmode\vspace{-\baselineskip}]
  \item
    \label{item:query}
    The root of $\bigsymbol{\tau}_P(q,c_q)$ is
 \goaltwo{q}{c_q}, the initial state.
  \item
    \label{item:states}
    The nodes of $\bigsymbol{\tau}_P(q, c_q)$ are labeled with its
    corresponding state \goalset{L}{c}, where $L$ is a set containing the
    constraints and literals pending to be solved.
  \item
    \label{item:clauses}
    The child/children of a node \goalset{l \cup L}{c},
    where $l$ is a literal,
%  selected by the computation rule, 
is/are:
    \begin{itemize}
    \item A node/nodes
 \goalset{body(h_i) \cup L}{c \land (l = h_i)} obtained
      by resolution of $l$ against the matching clause(s) $h_i$ :-
      $body(h_i)$ in $P$ where $l = h_i$ is an abbreviation for the
      conjunction of equations between the arguments of $l$ and
      $h_i$. There is one node for each matching clause.  Matching
      clauses are assumed to be renamed apart.
    \item Or a leaf node \emph{fail} if there are no clauses in $P$
      which matching heads for the literal $l$.
    \end{itemize}
  \item
    \label{item:const}
    The child of a node \goalset{c' \cup L}{c}, where $c'$ is a
    constraint,
% selected by the computation rule, 
    is:
    \begin{itemize}
    \item The node \goalset{L}{c \land c'} if
      $\domain \vDash c \land c'$.
    \item Or a leaf node \emph{fail} if
      $\domain \not\vDash c \land c'$.
    \end{itemize}
  \item
    \label{item:final}
    A leaf node \goalset{\emptyset}{c} is the final state of a
    successful derivation. $c$ is the final constraint store.
  \item
\label{item:ans}
    The set of answers of $\bigsymbol{\tau}_P(q, c_q)$ (i.e., the
    answers to the query \calltwo{q}{c_q}), denoted by $Ans(q, c_q)$, is the
    set of constraints $c'_{i}$ obtained as the projection of the
    final constraint stores $c_i$ onto $vars(q)$:
    $$Ans(q,c_q) = \{ c'_{i}\ \mid\ \ c'_{i} = Proj(vars(q),
    c_i).\goalset{\emptyset}{c_i} \in \bigsymbol{\tau}_P(q,c_q)
    \}$$
  \end{enumerate}
  
\end{deff}

We denote the set of tabled predicates in a TCLP program by
$Tab_P$.
The most general calls to predicates in $Tab_P$ are called generators
and are resolved against program clauses.
The set of generators created during the evaluation of a query
\calltwo{q}{c_q} is denoted by $Gen{(q, c_q)}$. 
The answers for a generator are collected and associated to that
generator; see below how entailment is used to keep only the relevant answers.
% in the answer entailment phase
%% in such a way that an answer that is entailed by another (more
%% general) answer is discarded/removed.
%
Calls to tabled predicates that are more particular than a previously
created generator become consumers and are not resolved against
program clauses. Instead, they are resolved by consuming the answers
collected from a generator; this is termed \emph{answer resolution}.

The execution of a query w.r.t.\ a TCLP program is represented as a
\emph{forest} of \emph{derivation trees}, and contains the tree
corresponding to the initial query and the trees corresponding to each
of the
generators.  The evaluation of each generator corresponds to one of
the trees of the forest.  During execution, call entailment
(Def.~\ref{def:tclp-forest}.\ref{item:call_entail})
%% \footnote{Note that this entailment check
%%   includes subsumption in the Herbrand domain.} 
detects when a goal is entailed/subsumed by a previous goal (its
generator) and if so, it suspends their execution and eventually
reuses the answers from the generator.  During answer entailment,
answers that are entailed by another (more general) answer are
discarded/removed (Def.~\ref{def:tclp-forest}.\ref{item:ans_entail}).

% In Def.~\ref{def:tclp-forest} we define the forest (a set of trees)
% generated during the computation of a TCLP program where there are
% two main phases: the call entailment phase
% (Def.~\ref{def:tclp-forest}.\ref{item:call_entail}) and the answer
% entailment phase (Def.~\ref{def:tclp-forest}.\ref{item:ans_entail}).

%% MCL: this was already said twice.
%%
%% The entailment check procedure is valid for any constraint domain;
%% therefore, to check subsumption for the Herbrand domain, the
%% clauses are linearized and unifications are written explicitly. For
%% example \textsf{p(f(a), g(Y), b) :- $\dots$} is treated as
%% \textsf{p(X$_1$, X$_2$, X$_3$) :- X$_1$=f(a), X$_2$=g(Y), X$_3$=b,
%% $\dots$}.

\begin{deff}[TCLP forest] % \mclcomm{Selfnote: reread very carefully}
  \label{def:tclp-forest}
  Let $P$ be a TCLP definite program, $Tab_P$ the set of tabled
  predicates, and \calltwo{q}{c_q} a query. A TCLP forest of \calltwo{q}{c_q} for
  $P$, denoted as $\mf_P(q,c_q)$ is the set of TCLP trees such that:
  
  \begin{enumerate}[wide=0.5em, leftmargin =*, nosep, before =
    \leavevmode\vspace{-\baselineskip}]

  \item The initial tree, $\bigsymbol{\tau}_P(q,c_q)$, is the TCLP
    tree of the query, and the rest of the trees,
    $\bigsymbol{\tau}_P(g_i, c_{g_i})$, are the TCLP trees of the
    generators \calltwo{g_i}{c_{g_i}} $\in Gen{(q, c_q)}$:
    \begin{displaymath}
      \mf_P(q,c_q) = \{ \bigsymbol{\tau}_P(q,c_q),
      \  \bigsymbol{\tau}_P(g_i, c_{g_i}), \dots \} \mbox{ with } i \geq 0
    \end{displaymath}

  \item A TCLP tree, denoted by $\bigsymbol{\tau}_P(q,c_q)$ (resp.\
    $\bigsymbol{\tau}_P(g_i, c_{g_i})$), is similar to a CLP tree
    where:
    
    \begin{enumerate}[wide=0.5em, leftmargin =*, nosep, before =
      \leavevmode\vspace{-\baselineskip}]
    \item
      \label{item:t-query}
      The root of the TCLP tree $\bigsymbol{\tau}_P(g,c)$ is
      \goaltwo{g}{c}, its initial state.
    % \item
    %   \label{item:t-tab_root}
    %   The root of a TCLP tree $\bigsymbol{\tau}_P(g, c_{g})$ is
    %   the state \goaltwo{g}{c_g} corresponding to a generator
    %   $\calltwo{g}{c_{g}} \in Gen{(q, c_q)}$. The child/children of this node
    %   is/are:

    %   \mclcommin{It is odd that there is a specific rule for the rood
    %     node. Makes everything larger and more cumbersome.  Also,
    %     $c_g$ disappears?}

    %   \begin{itemize}
    %   \item
    %     A node/nodes \goalset{body(h_i)}{c \land (g = h_i)} %
    %     \mclcomm{$body(h_i)$ was never defined } %
    %     obtained by resolution of $g$ against matching clauses $h_i$
    %     :- $body(h_i)$ in $P$, where $g = h_i$ is an abbreviation for
    %     the conjunction of equations between the corresponding arguments of
    %     $g$ and $h_i$. There is one child node for each matching clause.
    %   \item or a leaf node \emph{fail} if there are no clauses in $P$
    %     with matching heads for the literal $g$.
    %   \end{itemize}
      
    \item
      \label{item:call_entail}
      The descendants of a node \goalset{t \cup L}{c} where $t$ is a
      tabled literal
%      that is not (yet) recorded as a generator,
      %% (and therefore it is not the root of a new tree),
%      its answers $c_i$
      are obtained by obtaining answers for $t$ through answer
      resolution (i.e., consuming existing answers) in one of the two
      following ways:

%%       The child/children of a node that is not the
%%       root of a generator and where $t$ is a tabled literal
%%       ($t \in Tab_P$) selected by the computation rule, is/are
%%       obtained by answer resolution consuming the answers $c_i$ such
%%       that:

      \begin{itemize}
      \item[--] If \calltwo{t}{c} is a consumer of a previous
        generator $\calltwo{g}{c_g} \in Gen{(q,c_q)}$, we use the
        answers $c_i \in Ans(g, c_g)$ to construct its children.
        In this case, $g$ and $t$ match and \calltwo{g}{c_g} is
        entailed by \calltwo{t}{c}, i.e.,
        \mbox{$c \land (t = g) \sqsubseteq_{\domain} c_g$}.
%        \mclcomm{I have removed the requirement that it is the
%          oldest} %
        %
        As a reminder, $t = g$ denotes the conjunction of equality
        constraints between the corresponding arguments of $t$ and $g$
        and $Ans(g, c_g)$ is the set of recorded answers for
        \calltwo{g}{c_g}.

        % $c_i \in Ans(g, c_g)$, where $(g, c_g)$ is the oldest %
        % \mclcomm{Does it has to be the oldest?}
%     %
%         generator $(g,c_g) \in Gen{(q,c_q)}$ such that $g$ and $t$
%         match and \mbox{$c \land (t = g) \sqsubseteq_{\domain} c_g$}
%         ($(g,c_g)$ is more general than $(t,c)$). The shorthand
%         $t = g$ denotes the conjunction of equality constraints
%         between the corresponding arguments of $t$ and $g$, and
%         $Ans(g, c_g)$ is the set of recorded answers for $(g, c_g)$.  In this
%         case the goal $(t, c)$ is marked as a consumer of $(g,c_g)$.
      \item[--] Otherwise, \calltwo{t}{c} will produce a new generator
        \calltwo{t}{c'} and we use the answers $c_i \in Ans(t,
        c')$. In this case, a new TCLP tree
        $\bigsymbol{\tau}_P(t, c')$, %
%        \mclcomm{It was written a forest, but I think it is a tree?} %
        where $c' = Proj(vars(t),c)$, is created and added to the
        current forest.
        % and $Ans(t, c')$ is the set of answers of .
        The goal \calltwo{t}{c'} is then marked as a generator and
        added to
        $Gen{(q,c_q)}$. % , the set of generators of $\mf_P(q,c_q)$.
%%         \mclcommin{How does is tree corresponding to a generator
%%           constructed? It could fall in~\ref{item:rest1}, but it is not
%%           what~\ref{item:rest1} says.}
%%         \jahcommin{Done}
      \end{itemize}

      From the possible answers $c_i$ to \calltwo{t}{c}, children nodes are
      constructed as follows:
      \begin{itemize}
      \item A node \goalset{c_i \cup L}{c}, one for each answer
        $c_i$.
      \item Or a leaf \emph{fail} if there is no answer $c_i$.
      \end{itemize}

    \item
      \label{item:rest1}
      The transitions for non-tabled literals and for new generators are
      as in the CLP tree (Def.~\ref{def:clp-tree}.\ref{item:clauses}).

    \item
      \label{item:rest2}
      The transitions for constraints are as in the CLP tree
      (Def.~\ref{def:clp-tree}.\ref{item:const}).

    \item
      \label{item:t-final}
      A leaf node \goalset{\emptyset}{c} is the final
      state of a successful derivation and $c$ is its final
      constraint store.
    
    \item 
      \label{item:ans_entail}
      The set of answers of $\bigsymbol{\tau}_P(g,c_g)$, the TCLP tree
      of the generator \calltwo{g}{c_g}, denoted by $Ans(g,c_g)$, is the set
      % of most general (w.r.t. $\sqsubseteq_{\domain}$) 
      constraints $c'_i$ obtained as the projection of the final
      constraint stores $c_i$ onto $vars(g)$ that do not entail any
      other constraint $c_j$, i.e., they are the most general answers.

      \vspace{-2em}
      
      \begin{align*}
        Ans(c, c_g) = \{ c'_i & \mid c'_i = Proj(vars(g),c_i), 
        \goalset{\emptyset}{c_i} \in \bigsymbol{\tau}_P(g,c_g), \\
        & \nexists c_j \cdot \goalset{\emptyset}{c_j} \in \bigsymbol{\tau}_P(g,c_g), 
        c_i \not= c_j, c'_i \sqsubseteq Proj(vars(g),c_j)\}
      \end{align*}

      % \mclcommin{\goal{\emptyset,c_j}
      %   $\in \bigsymbol{\tau}_P(g,c_g)$ as well, right?}

      \end{enumerate}
    
    % \end{itemize}

    \item The set of the answers of the forest $\mf_P(q,c_q)$, denoted
      by $Ans(q,c_q)$, is the set of answers of
      $\bigsymbol{\tau}_P(q,c_q)$ that are obtained as in the CLP tree
      (Def.~\ref{def:clp-tree}.\ref{item:ans}).

%%       \mclcommin{Should not we do entailment here (and
%%         make ??%~\ref{item:ans_root}
%%         and~\ref{item:ans_entail}
%%         the same
%%         rule)?}
%%       \jahcommin{Simplified}

  \end{enumerate}
  
\end{deff}

The answer management strategy used in % the answer entailment phase
Def.~\ref{def:tclp-forest}.\ref{item:ans_entail} aims
% at generating the simplest set of answers by 
at keeping only the most general answers.
%% 
%% by discarding answers that are more particular than
%% 
%% .  This is specified by the quantification
%% $\forall j \neq i$.   
%% 
Since implementations incrementally save answers as they are found,
some previous proposals used simpler answer management strategies.
For example,~ \cite{bao2000,chico-tclp-flops2012-large} checked
entailment when adding answers to the previously generated ones and
only discarded answers which were more particular than a previous one.
This reduces the number of saved answers, but older answers that are
more particular than newer answers were still kept.  It could also be
possible to remove previous answers that are more particular than new
answers but still add answers that are more particular than previous ones.
%% 
%% 
%% but this still makes it
%% possible for saved answers that have not been 
%% 
%%  are possible. Implementations 
%% 
%% : the implementations in \cite{bao2000,chico-tclp-flops2012-large},
%% following~\cite{toman_theo_const_tabling}, only discard answers which
%% are more particular than a previous one, i.e., they implement
%% $\forall j < i$, and keep previously saved answers.  A third
%% possibility is to remove previous answers that are more particular
%% than a new one, implementing $\forall j > i$.  %
%% \mclcommin{I am not sure these indices are meaningful -- they not were
%%   defined for answers, but for the generator set.  And indices are a
%%   way to capture implementation. I would just speak about entailment here.} %
The choice among them does not impact soundness or completeness
properties.  However, discarding \textbf{and} removing redundant
answers, despite extra cost, has been shown to
%%  the final set of answers is independent of the search strategy
%% and in Section~\ref{sec:management} we show experimentally that it
greatly increase the efficiency of the implementation~\cite{TCLP-tplp2019}.

\begin{figure}
  \begin{tabular}{p{.97\textwidth}}
    \\[.1cm]
    {

  \renewcommand{\newstate}[3]{ \textbf{#1} }
  \renewcommand{\newstatelarge}[3]{ \newstate{#1}{}{} }
  
  \begin{minipage}{.45\textwidth}

    \begin{forest}
      for tree={
        font=\sffamily,
        delay={
          edge label/.wrap value={node[midway,font=\sffamily, left]{#1}},
        },
        % folder,
        % grow'=0,
        % fit=band,
        % l sep'-=1pt,
        % s sep+=-1pt,
        % inner sep=1pt,
        text width=2cm,
        text centered,
        % text height=.1cm,
        edge={->,>=stealth}
      }
      [ \newstate{s1}%
      {dist(a,Y,D)}%
      {D<150}
      [ \newstate{s2i}%
      {D1\pr1\#>0, D2\pr1\#>0, D\pr1\#=D1\pr1+D2\pr1, edge(a,Z\pr1,D1\pr1), dist(Z\pr1,Y\pr1,D2\pr1)}%
      {D<150,Y\pr1=Y,D\pr1=D}
      [ \newstate{s3}%
      {edge(a,Z\pr1,D1\pr1), dist(Z\pr1,Y,D2\pr1)}%
      {D<150,D1\pr1>0,D2\pr1>0,D=D1\pr1+D2\pr1}
      [ \newstate{s4}%
      {dist(b,Y,D2\pr1)}%
      {D<150,D1\pr1>0,D2\pr1>0,D=D1\pr1+D2\pr1,Z\pr1=b,D1\pr1=50}
      [ {Ans({\bf s5})}, delay={text width=3cm, text centered, edge={-}} 
      [ \newstate{s11}%
      {\ }%
      {D<150,D2\pr1>0,D=50+D2\pr1,V0=a,D2\pr1>25,D2\pr1<35}, edge label=(b1)
      [ {\textbf{a2} $\thickmuskip=0mu \mt{V0=a\ \land}$\\ $\thickmuskip=0mu \mt{V1>75 \land V1<85}$}, delay={text width=2cm, text centered}  ]
      ]
      [ \newstate{s14}%
      {\ }%
      {D<150,D2\pr1>0,D=50+D2\pr1,V0=b,D2\pr1>75,D2\pr1<85}, edge label=(b2)
      [ {\textbf{a3} $\thickmuskip=0mu \mt{V0=b\ \land}$\\ $\thickmuskip=0mu \mt{V1>125 \land  V1<135}$}, delay={text width=2.2cm, text centered}  ]
      ]
      ]
      ]
      ]
      ]
      [ \newstate{s2ii}%
      {edge(a,V0,D)}%
      {D<150,Y\pr1=V0,D\pr1=D}
      [ \newstate{s10}%
      {\ }%
      {D<150,V0=b,D=50}
      [  {\textbf{a1} $\thickmuskip=0mu \mt{V0=b\ \land}$\\ $\thickmuskip=0mu \mt{V1=50}$}, delay={text width=2cm, text centered}  ]
      ]
      ]
      ]
    \end{forest}

  \end{minipage}
  \begin{minipage}{.45\textwidth}

    \begin{forest}
      for tree={
        font=\sffamily,
        delay={
          edge label/.wrap value={node[midway,font=\sffamily, left]{#1}},
        },
        % folder,
        % grow'=0,
        % fit=band,
        % l sep'-=1pt,
        % s sep+=-1pt,
        % inner sep=1pt,
        % text width=2cm,
        % text centered,
        % text height=.1cm,
        edge={->,>=stealth}
      }
      [ \newstate{s5}%
      {dist(b,Y,D)}%
      {D>0, D<100}
      [ \newstate{s6i}%
      {D1\pr1\#>0, D2\pr1\#>0, D\pr1\#=D1\pr1+D2\pr1, edge(b,Z\pr1,D1\pr1), dist(Z\pr1,Y\pr1,D2\pr1)}%
      {D>0, D<100,Y\pr1=Y,D\pr1=D}
      [ \newstate{s7}%
      {edge(b,Z\pr1,D1\pr1), dist(Z\pr1,Y,D2\pr1)}%
      {D>0, D<100,D1\pr1>0,D2\pr1>0,D=D1\pr1+D2\pr1}
      [ \newstate{s8}%
      {dist(a,Y,D2\pr1)}%
      {D>0, D<100,D1\pr1>0,D2\pr1>0,D=D1\pr1+D2\pr1,Z\pr1=a,D1\pr1>25,D1\pr1<35}
      [ {Ans({\bf s1})}, delay={text width=2cm, text centered, edge={-}} 
      [ \newstate{s12}%
      {\ }%
      {D>0, D<100,D2\pr1>0,D>25+D2\pr1,D<35+D2\pr1,Y=b,D2=50}, edge label=(a1)
      [  {\textbf{b2} $\thickmuskip=0mu \mt{V2=b\ \land}$\\ $\thickmuskip=0mu \mt{V3>75 \land  V3<85}$}, delay={text width=2cm, text centered}  ]
      ]
      [ \newstate{s13}%
      {\ }%
      {D>0, D<100,D2\pr1>0,D>25+D2\pr1,D<35+D2\pr1,Y=a,D2>75,D2<85}, edge label=(a2)
      [  {\bf fail} ]
      ]
      [ \newstate{s15}%
      {\ }%
      {D>0, D<100,D2\pr1>0,D>25+D2\pr1,D<35+D2\pr1,Y=b,D2>125,D2<135}, edge label=(a3), delay={xshift=1cm}
      [  {\bf fail}, delay={xshift=1cm} ]
      ]
      ]
      ]]]
      [ \newstate{s6ii}%
      { edge(b,Y,D)}%
      { D>0, D<100, Y\pr1=Y, D\pr1=D }
      [ \newstate{s9}%
      {\ }%
      { D>0, D<100, Y=a, D>25, D<35 }
      [ {\textbf{b1} $\thickmuskip=0mu \mt{V2=a\ \land}$\\ $\thickmuskip=0mu \mt{V3>25 \land  V3<35}$}, delay={text width=3cm, text centered}  ]
      ]
      ]]
    \end{forest}

  \end{minipage}
} \\
    \\[-.2em]
    {\vspace{0.5em}\begin{forest}
  for tree={
    font=\sffamily,
    delay={
      edge label/.wrap value={node[midway,font=\sffamily, left]{#1}},
    },
    folder,
    grow'=0,
    fit=band,
    l sep'-=1pt,
    s sep+=-1pt,
    inner sep=1pt,
    text width=20cm,
    text height=.1cm,
    edge={->,>=stealth}
  }
  [ \newstate{s1}%
  {dist(a,V0,V1)}%
  {V1<150}
  [ \newstatelarge{s2i}%
  {D1\pr1\#>0, D2\pr1\#>0, V1\pr1\#=D1\pr1+D2\pr1, edge(a,Z\pr1,D1\pr1), dist(Z\pr1,Y\pr1,D2\pr1)}%
  {V1<150 \land Y\pr1=V0 \land D\pr1=V1}
  [ \newstate{s3}%
  {edge(a,Z\pr1,D1\pr1), dist(Z\pr1,V0,D2\pr1)}%
  {V1<150 \land D1\pr1>0 \land D2\pr1>0 \land V1=D1\pr1+D2\pr1}
  [ \newstate{s4}%
  {dist(b,V0,D2\pr1)}%
  {V1<150 \land D1\pr1>0 \land D2\pr1>0 \land V1=D1\pr1+D2\pr1 \land Z\pr1=b \land D1\pr1=50}, delay={l sep+=5pt,s sep+=7pt}
  [ {Ans(dist(b,V2,V3), $V3>0 \land  V3<100$)},%
  delay={%
    edge={-},%
    edge label={node[midway,font=\sffamily, right, yshift=10pt]%
      {\smaller with renaming $V0=V2\ \land\  D2\pr1=V3$}}}
  [ \newstate{s11}%
  {\,}%
  {V1<150 \land D2\pr1>0 \land V1=50+D2\pr1 \land V0=a \land D2\pr1>25 \land D2\pr1<35}, edge label=(b1)
  [ {\textbf{a2} $\thickmuskip=0mu \mt{V0=a \land  V1>75 \land  V1<85}$} ]
  ]
  [ \newstate{s14}%
  {\,}%
  {V1<150 \land D2\pr1>0 \land V1=50+D2\pr1 \land V0=b \land D2\pr1>75 \land D2\pr1<85}, edge label=(b2)
  [ {\textbf{a3} $\thickmuskip=0mu \mt{V0=b \land  V1>125 \land  V1<135}$} ]
  ]
  ]
  ]
  ]
  ]
  [ \newstate{s2ii}%
  {edge(a,V0,V1)}%
  {V1<150 \land Y\pr1=V0 \land D\pr1=V1}
  [ \newstate{s10}%
  {\,}%
  {V1<150 \land V0=b \land V1=50}
  [  {\textbf{a1} $\thickmuskip=0mu \mt{V0=b \land  V1=50}$} ]
  ]
  ]
  ]
\end{forest}} \\[-.3em]
    {\vspace{0.5em}\begin{forest}
  for tree={
    font=\sffamily,
    delay={
      edge label/.wrap value={node[midway,font=\sffamily, left]{#1}},
    },
    folder,
    grow'=0,
    fit=band,
    l sep'-=1pt,
    s sep+=-1pt,
    inner sep=1pt,
    text width=20cm,
    text height=.1cm,
    edge={->,>=stealth}
  }
  [ \newstate{s5}%
  {dist(b,V2,V3)}%
  {V3>0 \land  V3<100}
  [ \newstatelarge{s6i}%
  {D1\pr1\#>0, D2\pr1\#>0, D\pr1\#=D1\pr1+D2\pr1, edge(b,Z\pr1,D1\pr1), dist(Z\pr1,Y\pr1,D2\pr1)}%
  {V3>0 \land  V3<100 \land Y\pr1=V2 \land D\pr1=V3}
  [ \newstate{s7}%
  {edge(b,Z\pr1,D1\pr1), dist(Z\pr1,V2,D2\pr1)}%
  {V3>0 \land  V3<100 \land D1\pr1>0 \land D2\pr1>0 \land V3=D1\pr1+D2\pr1}
  [ \newstate{s8}%
  {dist(a,V2,D2\pr1)}%
  {V3>0 \land  V3<100 \land D1\pr1>0 \land D2\pr1>0 \land V3=D1\pr1+D2\pr1 \land Z\pr1=a \land D1\pr1>25 \land D1\pr1<35}, delay={l sep+=5pt,s sep+=7pt}
  [ {Ans(dist(a,V0\pr1,V1\pr1), $V1\pr1<150$) \smaller%
    is entailed  because $V1\pr1>0 \land  V1\pr1<75$ $\sqsubseteq$ $V1\pr1<150$},%
  delay={%
    edge={-},%
    edge label={node[midway,font=\sffamily, right, yshift=10pt]%
      {\smaller with renaming $V2=V0\pr1\ \land\  D2\pr1=V1\pr1$}}}
  [ \newstate{s12}%
  {\,}%
  {V3>0 \land  V3<100 \land D2\pr1>0 \land V3>25+D2\pr1 \land V3<35+D2\pr1 \land V2=b \land D2\pr1=50}, edge label=(a1)
  [  {\textbf{b2} $\thickmuskip=0mu \mt{V2=b \land  V3>75 \land  V3<85}$} ]
  ]
  [ \newstate{s13}%
  {\,}%
  {V3>0 \land  V3<100 \land D2\pr1>0 \land V3>25+D2\pr1 \land V3<35+D2\pr1 \land V2=a \land D2\pr1>75 \land D2\pr1<85}, edge label=(a2)
  [  {\bf fail} ]
  ]
  [ \newstate{s15}%
  {\,}%
  {V3>0 \land  V3<100 \land D2\pr1>0 \land V3>25+D2\pr1 \land V3<35+D2\pr1 \land V2=b \land D2\pr1>125 \land D2\pr1<135}, edge label=(a3)
  [  {\bf fail} ]
  ]
  ]
  ]]]
  [ \newstate{s6ii}%
  {edge(b,V2,V3)}%
  {V3>0 \land  V3<100 \land  Y\pr1=V2 \land  D\pr1=V3 }
  [ \newstate{s9}%
  {\,}%
  {V3>0 \land  V3<100 \land  V2=a \land  V3>25 \land  V3<35 }
  [ {\textbf{b1} $\thickmuskip=0mu \mt{V2=a \land  V3>25 \land  V3<35}$} ]
  ]
  ]]
\end{forest}} \\
  \end{tabular}
%  \vspace{-.5cm}
  \caption[TCLP forest of \code{dist/3} with right recursion.]{TCLP
    forest of \code{?-D.<.150,dist(a,Y,D)} with right recursion.}
  \label{fig:tree04}
\end{figure}

\begin{exaa}TCLP forest of \code{dist/3}
  \leftskip=20pt\parindent=0pt%
  ~\\
  This example 
  %% Let us use one of the examples from the Appendix A of the
  %% Supplementary Material of~\cite{TCLP-tplp2019} to
  illustrates how the algorithm works with mutually dependent
  generators, i.e., generators that consume answers from each other,
  and to see why not all the answers from a generator may be directly
  used by its consumers.\footnote{This example also appears in the
    Supplementary Material of~\cite{TCLP-tplp2019}.}
  Fig.~\ref{fig:tree04} shows the TCLP forest corresponding to
  querying the right-recursive \code{dist/3} program
  (Fig.~\ref{fig:dist-clp-right}).
  Unlike the left-recursive version, which generates only one TCLP
  tree,
  %% (see Appendix A of the Supplementary Material
  %% of~\cite{TCLP-tplp2019},
  the right-recursive version generates two TCLP trees, one for each
  generator.
  The reason is that the left-recursive version only seeks paths from
  the node \code{a}, but the right-recursive version creates a new
  TCLP tree at the state \textsf{s4} to collect the paths from the
  node \code{b}, since \code{edge(a,b)} had been previously evaluated
  at state \textsf{s3}. We explain now how we obtain some of the
  states; the rest are obtained similarly. % , so we will skip them:

\begin{description}[noitemsep, leftmargin=!,labelwidth=\widthof{\bfseries Ans(xx)
  }, labelindent=0cm, align=right]
  
\item[s1] the TCLP tree
  $\bigsymbol{\tau}_P(\mt{dist(a,V0,V1),\, V1<150})$ is created.

\item[s4] is obtained by resolving the literal
  \code{edge(a,Z$_1$,D1$_1$)}.

\item[Ans(s5)] the tabled literal \code{dist(b,V0,D2$_1$)} is a new
  generator and a new TCLP tree
  $\bigsymbol{\tau}_P(\mt{dist(b,V2,V3),\, V3>0 \land V3<100})$ is created
  (Def.~\ref{def:tclp-forest}.\ref{item:call_entail}).

\item[s5] is the root node of the new TCLP tree.
  
\item[s6i/ii] are obtained by resolving the literal
  \code{dist(b,V2,V3)} against the clauses of the program.
  
\item[s8] is obtained by resolving the literal
  \code{edge(b,Z$_1$,D1$_1$)}.

In the state \textsf{s8}, the call
\callcode{dist(a,V2,D2$_1$)}{D2$_1$>0$\ \land\ $D2$_1$<75} is suspended
because it entails the former generator
\callcode{dist(a,V0$_1$,V1$_1$)}{V1$_1$<150}.  
  
\item[Ans(s1)] the tabled literal \code{dist(a,V2,D2$_1$)} is
  resolved with answer resolution
  (Def.~\ref{def:tclp-forest}.\ref{item:ans_entail}) using the
  answers from the previous TCLP tree
  $\bigsymbol{\tau}_P(\mt{dist(a,V0_1,V1_1), V1_1<150})$ because the
  renamed projection\footnote{The projection of
    \code|V3>0$\ \land\ $V3<100$\ \land\ $D1$_1$>0$\ \land\ $D2$_1$>0$\ \land\ $V3=D1$_1$+D2$_1$$\ \land\ $Z$_1$=a$\ \land\ $D1$_1$>25| \code|$\land\ $D1$_1$<35| onto \code|D2$_1$| is \code|D2$_1$>0$\ \land\ $D2$_1$<75|.  After
    renaming \code|D2$_1$=V1$_1$|, the resulting projection is
    \code|V1$_1$>0$\ \land\ $V1$_1$<75|.} of the current constraint store onto
  the variable of the literal entails the projected constraint
  store of the generator:
  (\code{V1$_1$>0$\ \land\ $V1$_1$<75}) $\sqsubseteq$ \code{V1$_1$<150}. Since the initial
  TCLP forest is under construction and depends on itself, the current
  branch derivation is suspended.

  This suspension also causes the former generator to suspend at the
  state \textsf{s4}. 

\item[s9] is a final state obtained upon backtracking to the state
  \textsf{s6ii}.

\item[b1] is the first answer of the second generator.

  At this point the suspended calls can be resumed by consuming the
  answer \textsf{b1} or by evaluating \textsf{s2ii}. The algorithm
  first tries to evaluate \textsf{s2ii} and then it will resume
  \textsf{s4} consuming \textsf{b1}.

\item[s10] is a final state obtained upon backtracking to the state
  \textsf{s2ii}.
  
\item[a1] is the first answer of the first generator:
  \code{V0=b$\ \land\ $V1=50}.
  
\item[s11] is a final state obtained from the state \textsf{s4} by
  consuming \textsf{b1}.
  
\item[a2] is the second answer of the first generator:
  \code{V0=a$\ \land\ $V1>75$\ \land\ $V1<85}.
  
\item[s12] is a final state obtained from the state \textsf{s8} by
  consuming \textsf{a1}.
  
\item[b2] is the second answer of the second generator.

\item[s13] is a failed derivation obtained from \textsf{s8} by consuming
  \textsf{a2}. It fails because the constraints
  \code{V0=a$\ \land\ $V1>75$\ \land\ $V1<85} are inconsistent with the current
  constraint store. Note that the projection of the constraint store
  of \textsf{s8} onto \code{V1} is \code{V1>0$\ \land\ $V1<75}. Its child is
  a \code{fail} node.
  
\item[s14] is a final state obtained from the state \textsf{s4} by
  consuming \textsf{b2}.
  
\item[a3] is the third answer of the first generator:
  \code{V0=b$\ \land\ $V1>125$\ \land\ $V1<135}.

\item[s15] is a failed derivation obtained from \textsf{s8} by
  consuming \textsf{a3}. Its child is a \code{fail} node.

\end{description}

\end{exaa}

The comparison of this forest (with two trees) with the forest
obtained for the left-recursive version (with one tree) illustrates
why left recursion reduces the execution time and memory requirements
when using tabling / TCLP: left recursion will usually create fewer
generators.  We have also seen that using answers from a most general
call, as in the answer resolution of state \textsf{s8} (i.e., the
constraint store of the consumer \code{V1$_1$>0$\ \land\ $V1$_1$<75}
is more particular than the constraint store of the generator
\code{V1$_1$<150}), makes it necessary to filter the correct ones
(i.e., answer resolution for \textsf{a2} and \textsf{a3} failed). This
is not required in variant tabling because the answers from a
generator are always valid for its consumers.

\section{Soundness, Completeness, and Termination}
\label{sec:theorems-proofs}

In this section we prove the soundness and completeness
% (Section~\ref{sec:soundn-compl}) 
of the operational semantics for the top-down execution of tabled
constraint logic programs previously presented. Then, we present some
additional results on termination properties
% (Section~\ref{sec:termination}) 
for arbitrary constraint solvers that are not necessarily
constraint-compact, extending the results
in~\cite{toman_theo_const_tabling}.

\subsection{Soundness and Completeness}
\label{sec:soundn-compl}

\cite{toman_theo_const_tabling} proves soundness and completeness of
$SLG^C$ for TCLP Datalog programs by reduction to soundness and
completeness of bottom-up evaluation. It is possible to extend these
results to prove the soundness and completeness of our proposal: they only
differ in the answer management strategy and the construction of the
TCLP forest.  The strategy used in $SLG^C$ only discards answers which
are more particular than a previous answer, while in our proposal we
in addition remove previously existing more particular answers
(Def.~\ref{def:tclp-forest}.\ref{item:ans_entail}).  The result of
this is that only the most general answers are kept.
In $SLG^C$, the generation of the forest is modeled as the application
of rewriting rules.  In our proposal, the TCLP forest is defined as a
transition system (Def.~\ref{def:tclp-forest}), where the different
cases in the definition can be seen as rules which make the TCLP
forest evolve.

The lemma, theorems, and their proofs are reformulated taking in
consideration these differences.  First we prove that answer
resolution using entailment is correct w.r.t.\ SLD resolution; and
although only the most general answers are kept, answer resolution
using entailment is complete w.r.t.\ SLD resolution.  Then we use
these results to prove soundness and completeness of TCLP with
entailment w.r.t.\ the least fixed point semantics.

\begin{lemm}[Application of derivations with most general constraint stores]
  \label{lem:composition}
  Let \goaltwo{l_i,\,l_{i+1},\, \dots,\, l_k}{cs_{i}} $\leadsto$
  \goaltwo{l_{i+1},\, \dots,\, l_k}{cs_{i+1}} be a derivation
  and \calltwo{l_i}{c}
 a goal with  $cs_{i} \sqsubseteq c$. Then:
%%   \mclcomm{Change $cs_i$ to $cs_{i+a}$ and $cs_{i-1}$ to $cs_i$, so
%%     that they match the index of the literal? }
  %
  \begin{displaymath}
    \exists \goaltwo{l_i}{c} \leadsto \goalset{\emptyset}{c'} \mbox{ with } \ cs_{i+1}
    = cs_{i} \land c'
  \end{displaymath}
  \noindent
\end{lemm}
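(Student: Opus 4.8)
The plan is to proceed by induction on the number of transition steps in the given derivation $\goaltwo{l_i,\,l_{i+1},\,\dots,\,l_k}{cs_i} \leadsto \goaltwo{l_{i+1},\,\dots,\,l_k}{cs_{i+1}}$, exploiting the fact (noted after Def.~\ref{def:clp-tree}) that the computation rule is irrelevant. First I would observe that, since this derivation turns the resolvent $\{l_i, \dots, l_k\}$ into $\{l_{i+1}, \dots, l_k\}$, it only resolves $l_i$ together with its descendants and never selects any of the literals $l_{i+1}, \dots, l_k$ (they are all still present at the end). Because those passive literals are literals, not constraints, they contribute nothing to the store; hence the very same sequence of transitions, applied to the isolated goal $\goaltwo{l_i}{cs_i}$, is a valid derivation ending in $\goalset{\emptyset}{cs_{i+1}}$. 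Writing $\delta$ for the conjunction of all constraints added along it (the head unifications $l=h$ of Def.~\ref{def:clp-tree}.\ref{item:clauses} and the body constraints of Def.~\ref{def:clp-tree}.\ref{item:const}), we have $cs_{i+1} = cs_i \land \delta$.

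Next I would replay exactly the same transitions starting from $\goaltwo{l_i}{c}$, reusing the same clauses and the same fresh renamings, and maintain as the induction invariant that after $j$ steps the active resolvent coincides with that of the original derivation while the store equals $c \land \delta_j$, where $\delta_j$ is the part of $\delta$ accumulated in the first $j$ steps. For a literal-resolution step the invariant is immediate, since head matching adds $l=h$ regardless of the store. For a constraint step I must check that the addition keeps the store consistent, and this is precisely where the hypothesis $cs_i \sqsubseteq c$ is used: by the definition of entailment, any solution of the original (consistent) state $cs_i \land \delta_j$ is also a solution of $c \land \delta_j$, so consistency transfers. Thus no branch fails and the replayed derivation reaches $\goalset{\emptyset}{c'}$ with $c' = c \land \delta$.

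Finally I would close with the algebraic observation that $cs_i \sqsubseteq c$ forces $cs_i \land c \equiv cs_i$, since any valuation satisfying $cs_i$ already satisfies $c$. Hence
\[
  cs_i \land c' \;=\; cs_i \land c \land \delta \;\equiv\; cs_i \land \delta \;=\; cs_{i+1},
\]
which is the required identity.

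The main obstacle I anticipate is the bookkeeping behind the isolation step: one must argue carefully that dropping the passive literals $l_{i+1}, \dots, l_k$ changes neither which clauses match $l_i$ nor the fresh variables introduced, so that the accumulated $\delta$ is \emph{literally} the same constraint in both derivations — this is where renaming-apart and the linearity of heads are pulled in, and where I would choose the fresh names in the replayed run to agree with those of the original run (and with $\mathit{vars}(c)$). Everything else — the step-by-step mirroring and the consistency transfer — is routine once the invariant is fixed; the only genuinely load-bearing hypothesis is $cs_i \sqsubseteq c$, which does double duty: preserving consistency throughout the replay and collapsing $cs_i \land c$ to $cs_i$ at the end.
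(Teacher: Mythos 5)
Your proof is correct, but it is organized differently from the paper's. The paper argues by structural induction on the clause used to resolve $l_i$: a base case for clauses of the form $l_i$ :- $c_h$ (constraint-only body), and an inductive case for clauses $l_i$ :- $c_h,\, a_1,\, \dots,\, a_m$, where the lemma is applied recursively to each body literal $a_j$ with the trivially more general store $true$ (any store entails $true$), so that both the original and the new final stores factor as the respective initial store conjoined with $c_h \land \bigwedge_{j=1}^m c'_{a_j}$; the hypothesis $cs_i \sqsubseteq c$ then collapses $cs_i \land c$ to $cs_i$, exactly as in your closing display. Your proof instead runs a small-step simulation: isolate the transition subsequence that resolves $l_i$ and its descendants, replay it verbatim from the more general store $c$, and carry the invariant that after $j$ steps the store is $c \land \delta_j$. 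Both arguments rest on the same two load-bearing facts --- the constraints $\delta$ accumulated by a derivation are determined by the clauses and matching, not by the initial store, and entailment gives $cs_i \land c \equiv cs_i$ --- but your version makes explicit a point the paper leaves implicit: that the replayed derivation cannot die at a constraint-consistency check, because consistency of $cs_i \land \delta_j$ transfers to $c \land \delta_j$ via $cs_i \sqsubseteq c$. Conversely, the paper's compositional phrasing (a goal's answer is assembled from answers of its subgoals computed from $true$) is closer in shape to how the lemma is subsequently used for answer resolution, where stored generator answers play the role of the $c'_{a_j}$. The renaming-apart bookkeeping you flag as the main obstacle is treated at the same informal level in the paper, so it does not separate the two proofs.
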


%% \mclcommin{Shouldn't $l_i,\,l_{i+1},\, \dots,\, l_k$
%%   be $\{l_i,\,l_{i+1},\, \dots,\, l_k\}$?}
%% \mclcommin{$\goal{l_i,c} \leadsto \goal{\emptyset,c'}$ is
%%   strange, as $l_i$ is not a set?}

Intuitively, if there is an SLD derivation that gives a solution for a
goal \calltwo{l_i}{cs_i}, this solution can be obtained using the
solution for a more general goal \calltwo{l_i}{c} without the need to
resolve the more particular one.

\begin{proof}
We will see that there exists a  derivation \goaltwo{l_i}{c}
  $\leadsto$ \goalset{\emptyset}{c'} that follows the same steps as
  \goaltwo{l_i,\, \dots,\, l_k}{cs_{i}} $\leadsto$
  \goaltwo{l_{i+1},\, \dots,\, l_k}{cs_{i+1}}:
  
  (1) if \goaltwo{l_i,\, \dots,\, l_k}{cs_{i}} is
  resolved against a clause $l_i$ :- $c_h$, then its resulting
  constraint store is $cs_{i+1} = cs_{i} \land c_h$ (plus head unification).
  Since $cs_{i}
  \sqsubseteq c $, we can apply the same rule to 
  \goaltwo{l_i}{c} and its resulting constraint store is
  $c' = c \land c_h$. Also, since $cs_{i} \sqsubseteq c $, we have
  $cs_{i} \Leftrightarrow cs_{i} \land c$.  Therefore,
  $cs_{i+1}= cs_{i} \land c \land c_h$ (expanding $cs_{i}$) and
  $ cs_{i+1} = cs_{i} \land c'$ (contracting $c \land
  c_h$). % \mclcomm{I can't parse this fla?}
  
  (2) if \goaltwo{l_i,\, \dots,\, l_k}{cs_{i}} is
  resolved against a clause $l_i$ :- $c_h,\, a_1,\, \dots,\, a_m$,
  the next state is \goaltwo{a_{1},\, \dots,\, a_m,\,
    l_{i+1},\, \dots ,\, l_k}{cs_{i} \land c_h} (resp.
  \goaltwo{a_{1},\, \dots,\, a_m}{c \land c_h}). By
  induction, since $cs_{i} \sqsubseteq true$ (resp.
  $c \sqsubseteq true$), there exist $m$ derivations
  $\goaltwo{a_j}{true} \leadsto \goalset{\emptyset}{c'_{a_{j}}}$ such that
  the resulting constraint store of the path is
  $cs_{i+1} = cs_{i} \land c_h \land \bigwedge_{j=1}^m c'_{a_j}$ (resp.
  $c' = c \land c_h \land \bigwedge_{j=1}^m c'_{a_j}$). Since
  $cs_{i} \sqsubseteq c $, we have
  $cs_{i} \Leftrightarrow cs_{i} \land c$. Therefore,
  $cs_{i+1} = cs_{i} \land c \land c_h \land \bigwedge_{j=1}^m
  c'_{a_j}$ (expanding $cs_{i}$) and $ cs_{i+1} = cs_{i} \land c'$
  (contracting $c \land c_h \land \bigwedge_{j=1}^m c'_{a_j}$).
\end{proof}

We will use this lemma to prove correctness of answer resolution.  We
model the answers obtained for a generator with the derivation
$\goaltwo{l_i}{c} \leadsto \goalset{\emptyset}{c'}$, while
$\calltwo{l_i}{cs_i}$ would be a consumer for the generator
\calltwo{l_i}{c}.  Note that the condition $cs_{i} \sqsubseteq c$
precisely captures the generator / consumer relationship.

\begin{corr}[Correctness of answer resolution using entailment]
  \label{cor:ans_corr}
  As an immediate consequence of Lemma~\ref{lem:composition}, using
  answer resolution with entailment
  (Def.~\ref{def:tclp-forest}.\ref{item:call_entail}) gives correct
  results.  Answer resolution of \goaltwo{l_i,\, \dots,\,
    l_k}{cs_{i}} consumes an answer $c'$ from a previous
  derivation \goaltwo{l_i}{c} $\leadsto$ \goalset{\emptyset}{c'} where
  \calltwo{l_i}{c} is the generator of the derivation and, by the definition
  of generator,
  $cs_{i} \sqsubseteq c$.
When
  $\domain \vDash cs_i \land c'$
  (Def.~\ref{def:tclp-forest}.\ref{item:rest2}), it generates the state
  \goaltwo{l_{i+1},\, \dots,\, l_k}{cs_{i} \land c'}.
\end{corr}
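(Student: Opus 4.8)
The plan is to show that every state produced by answer resolution is also reachable by ordinary resolution of the selected literal; this is precisely what ``gives correct results'' should mean here, namely that answer resolution introduces no derivation that plain SLD could not perform. The whole argument rides on Lemma~\ref{lem:composition}, which already pins down the bookkeeping on the constraint stores, so the real work is to read that lemma in the direction required for soundness and to justify that the generator's derivation can be transplanted onto the consumer.

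First I would fix the data handed to us by answer resolution (Def.~\ref{def:tclp-forest}.\ref{item:call_entail}): the consumer state \goaltwo{l_i,\, \dots,\, l_k}{cs_i}, the generator \calltwo{l_i}{c} with $cs_i \sqsubseteq c$, and one of its recorded answers arising from a complete derivation \goaltwo{l_i}{c} $\leadsto$ \goalset{\emptyset}{c'}. Since $cs_i \sqsubseteq c$ gives $cs_i \Leftrightarrow cs_i \land c$, Lemma~\ref{lem:composition} already tells us that conjoining to $cs_i$ the constraints accumulated along that generator derivation yields exactly $cs_i \land c'$. Hence the state \goaltwo{l_{i+1},\, \dots,\, l_k}{cs_i \land c'} emitted by answer resolution carries precisely the store that an ordinary derivation resolving $l_i$ from $cs_i$, reusing the generator's clause choices, would reach.

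The step I expect to be delicate is transplanting the generator's derivation onto the more constrained consumer. I would argue that each step of \goaltwo{l_i}{c} $\leadsto$ \goalset{\emptyset}{c'} replays on \goaltwo{l_i,\, \dots,\, l_k}{cs_i}. Because the program is normalised so that head unifications live in the store (Def.~\ref{def:clp-tree}.\ref{item:clauses}), applying a clause is itself just the adjunction of an equality constraint, so both resolution and constraint transitions amount to conjoining constraints, and each such step stays admissible as long as the accumulated store remains consistent. This is exactly where the guard $\domain \vDash cs_i \land c'$ (Def.~\ref{def:tclp-forest}.\ref{item:rest2}) does the work: every intermediate store along the replay is more general than the final $cs_i \land c'$, so consistency of the latter forces consistency of all of them and no replayed step can fail. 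The whole generator derivation therefore lifts to a genuine consumer derivation ending in \goaltwo{l_{i+1},\, \dots,\, l_k}{cs_i \land c'}.

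Putting the pieces together, the state produced by answer resolution coincides with a state obtained by true SLD resolution of $l_i$, which is the correctness claim; since this reduces to invoking Lemma~\ref{lem:composition} together with the consistency guard, the result is indeed immediate. The only genuine subtlety I would flag explicitly is directional: Lemma~\ref{lem:composition} is phrased with a consumer derivation as hypothesis and a generator derivation as conclusion, whereas soundness needs the converse reading. I would note that the mirroring construction in its proof is symmetric under the hypothesis $cs_i \sqsubseteq c$, so the same step-by-step replay works when the given datum is the generator derivation rather than the consumer one, which is what licenses treating the corollary as an immediate consequence.
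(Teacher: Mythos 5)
Your proposal is correct, and it follows the same basic route as the paper, which in fact offers no separate proof of Corollary~\ref{cor:ans_corr}: the corollary is stated as an immediate consequence of Lemma~\ref{lem:composition} together with the consistency guard $\domain \vDash cs_i \land c'$. Where you go beyond the paper is in your final paragraph, and that addition is genuine content rather than pedantry. Lemma~\ref{lem:composition} as literally stated maps a consumer clause-resolution step to an existing generator answer, whereas soundness of answer resolution needs the converse map: from a stored generator answer $c'$ (consistent with $cs_i$) to a genuine consumer derivation reaching $cs_i \land c'$. Your repair is the right one: since head unifications are compiled into the store, every step of the generator derivation \goaltwo{l_i}{c} $\leadsto$ \goalset{\emptyset}{c'} is just the adjunction of a constraint, so it can be replayed verbatim on \goaltwo{l_i,\,\dots,\,l_k}{cs_i}; each intermediate replayed store is a subconjunction of $cs_i \land c'$ and hence more general than it, so the guard $\domain \vDash cs_i \land c'$ guarantees that no replayed step fails, and $cs_i \Leftrightarrow cs_i \land c$ gives the final store $cs_i \land c'$. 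This replay-with-propagated-consistency argument is exactly what the paper leaves implicit when it declares the corollary immediate, so your write-up is, if anything, more careful than the paper's own treatment.
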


\begin{corr}[Completeness of answer resolution using entailment]
  \label{cor:ans_compl}
  Recall that  
$Ans(l, c)$ is the set containing the most general answers for a
generator goal \calltwo{l}{c}
(Def.~\ref{def:tclp-forest}.\ref{item:ans_entail}), and if there are
two goals \calltwo{l}{c_a} and \calltwo{l}{c_b} with $c_a \sqsubseteq c_b$,
only the answers for the most general goal $c_b$ need to be kept.  Therefore, for
any derivation of a generator $\goaltwo{l_i}{c} \leadsto \goalset{\emptyset}{c_i}$ we have
that $\exists c_i' \in Ans(l_i,c').c_i \sqsubseteq c_i'$ for some $c'$
s.t.\ $c \sqsubseteq c'$.  Let us take
a (partial) clause derivation
$\goaltwo{l_i, \ldots, l_k}{c} \leadsto \goaltwo{l_{i+1}, \ldots, l_k}{c
  \land c_i}$.  If $c_i' \in Ans(l_i, c')$ for some $c'$
s.t. \ $c \sqsubseteq c'$ (which is the entailment condition necessary
to use the saved answer constraints), then $c_i \sqsubseteq c_i'$.  If
we use $c_i'$ to perform answer resolution with \calltwo{l_i}{c}, we have
$\goaltwo{l_i, \ldots, l_k}{c} \leadsto \goaltwo{l_{i+1}, \ldots, l_k}{c
  \land c_i'}$.  Given that $c_i \sqsubseteq c_i'$, we have that
$c \land c_i \sqsubseteq c \land c_i'$, and any answer returned by
clause resolution is contained in some answer returned by answer
resolution with entailment.  The same reasoning can be applied to the
derivation of $l_{i+1}$ and so on.  Therefore, answer resolution with
entailment does not lose answers w.r.t.\ clause resolution even if not
all the goals and answers are memorized.
\end{corr}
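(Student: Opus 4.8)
The plan is to prove completeness by induction on the structure of the clause (SLD) derivation, showing that every answer obtainable by ordinary clause resolution for a consumer goal is entailed by---and hence recovered through---some stored answer of its generator. The key is to combine the ``most general answers'' property of $Ans(l,c)$ (Def.~\ref{def:tclp-forest}.\ref{item:ans_entail}) with the generator/consumer entailment condition $c \sqsubseteq c'$ that licenses answer resolution, and to lift a single-step subsumption through the whole derivation.

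First I would fix a consumer goal \calltwo{l_i}{c} together with a clause-resolution derivation $\goaltwo{l_i, \ldots, l_k}{c} \leadsto \goaltwo{l_{i+1}, \ldots, l_k}{c \land c_i}$ that produces some answer $c_i$ for $l_i$. I would then establish the base claim: for every such $c_i$ there exists a stored answer $c_i' \in Ans(l_i, c')$, for some generator constraint $c'$ with $c \sqsubseteq c'$, satisfying $c_i \sqsubseteq c_i'$. This claim rests on two facts. First, the generator \calltwo{l_i}{c'} is at least as general as the consumer, so by Lemma~\ref{lem:composition} a derivation of \calltwo{l_i}{c'} reproduces the steps of the derivation of the more particular \calltwo{l_i}{c}; the two answer sets are therefore related by entailment rather than being unrelated. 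Second, the answer management strategy retains only the most general answers, so even if the exact answer $c_i$ was discarded, some more general stored answer $c_i'$ entailing it remains.

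Next I would perform the answer-resolution step using $c_i'$: since $c \sqsubseteq c'$ licenses consuming the generator's answers, answer resolution yields $\goaltwo{l_i, \ldots, l_k}{c} \leadsto \goaltwo{l_{i+1}, \ldots, l_k}{c \land c_i'}$. From $c_i \sqsubseteq c_i'$ I would conclude $c \land c_i \sqsubseteq c \land c_i'$ by monotonicity of conjunction under entailment, so the store produced by answer resolution is at least as general as the one produced by clause resolution. This is the single inductive increment: the answer-resolution continuation covers the clause-resolution continuation. I would then close the induction on $l_{i+1}, \ldots, l_k$, applying the same argument to $l_{i+1}$ under the (now more general) store $c \land c_i'$, and so on, so that the final projected answer of the answer-resolution derivation entails the full clause-resolution answer; hence no answers are lost.

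The main obstacle I expect is the base claim---precisely that discarding all but the most general answers never eliminates the entailment witness $c_i'$. Justifying it requires arguing that the generator's tree, evaluated under the strictly more general store $c'$, genuinely subsumes every answer the consumer could compute under $c$, which is exactly where Lemma~\ref{lem:composition} and the definition of $Ans$ must be combined with care: one has to track the projection onto $vars(l_i)$ and the variable renaming that relates the generator and the consumer, so that ``more general'' answers on the generator side translate into entailing answers after renaming and projection on the consumer side.
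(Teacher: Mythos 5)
Your proposal is correct and follows essentially the same route as the paper: the paper's own argument (inlined in the corollary) also establishes the existence of a stored, more general answer $c_i' \in Ans(l_i,c')$ via the generator/consumer entailment condition and the most-general-answers property of Def.~\ref{def:tclp-forest}.\ref{item:ans_entail}, lifts it through one resolution step by monotonicity of conjunction ($c \land c_i \sqsubseteq c \land c_i'$), and then iterates the same reasoning over $l_{i+1},\dots,l_k$. Your explicit appeal to Lemma~\ref{lem:composition} for the base claim, and your remark about tracking projection and renaming, simply make explicit what the paper leaves implicit.
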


\begin{thee}[Soundness w.r.t.\ the fixpoint semantics]
  \label{the:sound}
  Let $P$ be a TCLP definite program and \calltwo{q}{c_q} a query. Then for
  any answer $c'$ of the TCLP forest $\mf_P(q,c_q)$
  \begin{displaymath}
    c' \in Ans(q,c_q)\ \Rightarrow \exists
    \calltwo{q}{c} \in \lfp(S_P^{\domain}(\emptyset)).
    \ c' = c_q \land c
  \end{displaymath}
\noindent 
I.e., any answer derived from the forest construction can also be
derived from the bottom-up computation.
\end{thee}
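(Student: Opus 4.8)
The plan is to prove a slightly stronger, uniform statement covering \emph{every} tree of the forest and then read off the theorem as the special case of the query tree. Concretely, I would show that for every generator $(g,c_g) \in Gen(q,c_q) \cup \{(q,c_q)\}$ and every answer $c_a \in Ans(g,c_g)$ there is a pair $(g,c_0) \in \lfp(S_P^{\domain}(\emptyset))$ with $c_a = c_g \land c_0$; the theorem is then the instance $g=q$, $c_g=c_q$, $c=c_0$. This generalization is needed because the answers of the query tree are produced by \emph{consuming} the answers of other generators (Def.~\ref{def:tclp-forest}.\ref{item:call_entail}), so the inductive hypothesis must speak about all trees of the forest at once.

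Each concrete answer $c_a$ is, by Def.~\ref{def:tclp-forest}.\ref{item:ans_entail}, the projection onto $vars(g)$ of the store $c_i$ of some final state $\langle \emptyset, c_i\rangle$ reached by a \emph{finite} branch of $\bigsymbol{\tau}_P(g,c_g)$. I would induct on a well-founded measure of that branch, namely the number of resolution steps together with the number of answer-resolution (consumer) steps it uses, which is finite precisely because the branch itself is finite; equivalently, one may stratify by the stage $n$ with $\lfp(S_P^{\domain}(\emptyset)) = \bigcup_n S_P^{\domain}{\uparrow}n$ and show that a branch consuming only answers that correspond to elements of $S_P^{\domain}{\uparrow}n$ yields an element of $S_P^{\domain}{\uparrow}(n+1)$. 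This stratification is what tames the \emph{mutual} dependency between generators without a circular argument.

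The inductive step is designed to mirror one application of $S_P^{\domain}$ (Def.~\ref{def:S_p}) exactly. The branch begins by resolving $g$ against a clause $g$ :- $c_h, l_1, \dots, l_k$ (Def.~\ref{def:tclp-forest}.\ref{item:rest1}), the constraint $c_h$ is added to the store (Def.~\ref{def:tclp-forest}.\ref{item:rest2}), and each body literal $l_i$ is solved: a non-tabled $l_i$ by clause resolution, a tabled $l_i$ by answer resolution consuming some $c_i$ (Def.~\ref{def:tclp-forest}.\ref{item:call_entail}). By Corollary~\ref{cor:ans_corr} each consumed $c_i$ is a genuine answer of its generator, and by the induction hypothesis it corresponds to a fixpoint pair $(a_i,c_i) \in \lfp(S_P^{\domain}(\emptyset))$; the non-tabled sub-derivations are handled by the same hypothesis on the shorter sub-branches. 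The resulting final store is $c_g \land c_h \land \bigwedge_{i=1}^k (a_i = l_i \land c_i)$, so setting $c_0 = Proj(vars(g),\, c_h \land \bigwedge_{i=1}^k (a_i = l_i \land c_i))$ reproduces verbatim the constraint $c'$ built inside $S_P^{\domain}$. Hence $(g,c_0) \in S_P^{\domain}(I)$ for the interpretation $I$ collecting the $(a_i,c_i)$, and since $I \subseteq \lfp(S_P^{\domain}(\emptyset))$ by the hypothesis, $(g,c_0)$ lies in the fixpoint. Finally, using $vars(c_g) \subseteq vars(g)$ and the projection identity $Proj(S, c_g \land X) = c_g \land Proj(S, X)$ when $vars(c_g) \subseteq S$, one gets $c_a = Proj(vars(g), c_g \land c_h \land \bigwedge_i(a_i = l_i \land c_i)) = c_g \land c_0$, which closes the induction.

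The main obstacle I anticipate is not the per-step simulation, which is a direct match, but reconciling the two \emph{different} answer-management policies with the demanded equality $c' = c_q \land c$. The forest keeps only the most general answers and \emph{removes} subsumed ones (Def.~\ref{def:tclp-forest}.\ref{item:ans_entail}), whereas $S_P^{\domain}$ merely refrains from \emph{adding} a subsumed pair but may retain a strictly more general one. When a consumer reuses a stored answer of a \emph{more general} generator rather than the exact one, I would invoke Lemma~\ref{lem:composition} and Corollary~\ref{cor:ans_compl} to argue that the more general answer, conjoined with the consumer's store, still produces the very same final constraint, so no specific answer is lost and the equality (rather than mere entailment) is recovered. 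Spelling out this point, together with the projection identity above, is where the argument has to be carried out with care.
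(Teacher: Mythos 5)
Your proposal is correct in substance but takes a genuinely different route from the paper. The paper's proof is a short reduction: it takes the successful derivation behind $c' \in Ans(q,c_q)$, uses Corollary~\ref{cor:ans_corr} to argue that any answer-resolution steps in it could be replaced by ordinary clause resolution, applies Lemma~\ref{lem:composition} with the trivially more general goal \calltwo{q}{true} to obtain a derivation $\goaltwo{q}{true} \leadsto \goalset{\emptyset}{c}$ with $c' = c_q \land c$, and then invokes the classical soundness of CLP clause resolution w.r.t.\ the S-semantics fixpoint as a known result to conclude $\calltwo{q}{c} \in \lfp(S_P^{\domain}(\emptyset))$. You instead re-derive that classical result from scratch: you strengthen the statement to all generators of the forest and run a well-founded induction (stratified by fixpoint stages) that simulates each branch as applications of $S_P^{\domain}$ (Def.~\ref{def:S_p}). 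What your approach buys is rigor exactly where the paper is terse: a consumed answer may itself have been produced using answer resolution, so the paper's appeal to Corollary~\ref{cor:ans_corr} implicitly requires the same well-founded unfolding over the order in which answers are produced; your stratification makes this explicit and non-circular. What the paper's approach buys is brevity and reuse of already-established lemmas rather than a second structural induction.

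One point in your inductive step needs more care than your sketch gives it. You write the final store as $c_g \land c_h \land \bigwedge_{i}(a_i = l_i \land c_i)$ with the $c_i$ being the \emph{consumed} answers, and claim this reproduces \emph{verbatim} the constraint built inside $S_P^{\domain}$. It does not, verbatim: by your own induction hypothesis a consumed answer has the form $c_i = c_{g_i} \land c_i^{0}$, where $(a_i, c_i^{0})$ is the fixpoint pair; it carries the call constraint $c_{g_i}$ of its generator, which the fixpoint constraint does not. The discrepancy washes out only because the consumer condition $c \land (t = g_i) \sqsubseteq_{\domain} c_{g_i}$ (Def.~\ref{def:tclp-forest}.\ref{item:call_entail}) makes conjoining $c_{g_i}$ redundant at the moment of consumption, so the store after the consuming step equals what conjoining just $c_i^{0}$ would give. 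This is essentially the "equality is recovered" argument you defer to your final paragraph, but it must be carried out \emph{inside} the induction, at every consumer step, for the claimed identity $c_a = c_g \land c_0$ to go through. With that repair (and noting that a branch with nested clause resolutions for non-tabled literals lands in some finite stage, not necessarily stage $n+1$, which is harmless by monotonicity of the stages), your argument is complete.
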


\begin{proof}
  For any answer $c' \in Ans(q,c_q)$ there exists a successful
  derivation $\goaltwo{q}{c_q} \leadsto \goalset{\emptyset}{c'}$.  Since
  $c_q \sqsubseteq true$, by Lemma~\ref{lem:composition} there exists
  $\goaltwo{q}{true} \leadsto \goalset{\emptyset}{c}. \ c' = c_q \land
  c$. We know that for any successful derivation
  $\goaltwo{q}{true} \leadsto \goalset{\emptyset}{c}$ against the clauses
  of the program there is an answer derived
  from the bottom-up computation
  $\calltwo{q}{c} \in \lfp(S_P^{\domain}(\emptyset))$. Therefore, by
  Corollary~\ref{cor:ans_corr} if answer resolution is used instead of
  clause resolution, the result is also correct and for any answer
  $c' \in Ans(q,c_q)$ there exists
  $\calltwo{q}{c} \in \lfp(S_P^{\domain}(\emptyset)).  \ c' = c_q \land c$.
\end{proof}

\begin{thee} [Completeness w.r.t.\ the fixpoint semantics]
  \label{the:complete}
  Let $P$ be a TCLP definite program and \calltwo{h}{true} a query. Then for
  every \calltwo{h}{c} in $\lfp(S_P^{\domain})$:
  \begin{displaymath}
    \calltwo{h}{c} \in \lfp(S_P^{\domain}(\emptyset))\ \Rightarrow \ \exists c' \in Ans(h,true). \ c \sqsubseteq c'
  \end{displaymath}
  \noindent
  I.e., all the answers derived from the bottom-up computation are
  also derived by the forest construction or entailed by answers
  inferred in the forest.
\end{thee}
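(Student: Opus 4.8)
The plan is to dualize the argument used for soundness (Theorem~\ref{the:sound}): there, a forest derivation was turned into a fixpoint answer via Lemma~\ref{lem:composition} and Corollary~\ref{cor:ans_corr}; here I would instead start from a fixpoint answer, exhibit a clause-resolution (SLD) derivation for it, and then use Corollary~\ref{cor:ans_compl} to show that the forest, which replaces clause resolution by answer resolution with entailment at tabled calls, still produces an answer that covers it. Concretely, I would first invoke the standard completeness of CLP's clause-resolution semantics w.r.t.\ the fixpoint semantics (the dual of the fact already used in the proof of Theorem~\ref{the:sound}): for every $\calltwo{h}{c} \in \lfp(S_P^{\domain}(\emptyset))$ there is a successful clause-resolution derivation $\goaltwo{h}{true} \leadsto \goalset{\emptyset}{c''}$ with $c = Proj(vars(h), c'')$. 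The remaining work is to transfer this derivation into the TCLP forest $\mf_P(h,true)$ without losing $c$ up to $\sqsubseteq$.

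Because tabled subgoals are solved by answer resolution rather than clause resolution, and a tabled subgoal may be nested arbitrarily deep, I would carry out an induction on the least iteration index $n$ such that $\calltwo{h}{c} \in (S_P^{\domain})^n(\emptyset)$. The base case (facts, $n=1$) is immediate, since the forest resolves the fact clause directly and its answer equals, hence covers, $c$. For the inductive step I would unfold the definition of $S_P^{\domain}$: $\calltwo{h}{c}$ arises from a clause $h$~:-~$c_h, l_1, \dots, l_k$ together with constraint literals $(a_i, c_i) \in (S_P^{\domain})^{n-1}(\emptyset)$ with $c = Proj(vars(h),\, c_h \land \bigwedge_i (a_i = l_i \land c_i))$. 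The induction hypothesis then provides, for each body literal, a forest answer that covers $c_i$.

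To finish the step I would run the matching clause at the root of the forest and resolve the body left to right. Non-tabled literals are resolved exactly as in the CLP tree (Def.~\ref{def:tclp-forest}.\ref{item:rest1}); each tabled literal $l_i$ is either a new generator or a consumer of some generator $\calltwo{g_i}{c_{g_i}}$ whose call constraint is more general than the accumulated store (the entailment condition of Def.~\ref{def:tclp-forest}.\ref{item:call_entail}). By the induction hypothesis and Corollary~\ref{cor:ans_compl}, the stored answers of that generator cover the fixpoint answer $c_i$, and feeding them back through answer resolution (Corollary~\ref{cor:ans_corr}) preserves coverage across the conjunction with the current store and the final projection, since both $\land$ and $Proj$ are monotone w.r.t.\ $\sqsubseteq$. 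Composing these coverings along the body yields a forest derivation whose final store projects to some $c'$ with $c \sqsubseteq c'$; because $Ans(h,true)$ retains only the most general answers (Def.~\ref{def:tclp-forest}.\ref{item:ans_entail}), either this $c'$ or a still more general answer entailing it lies in $Ans(h,true)$, which is the desired conclusion.

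The main obstacle I anticipate is the interaction of \emph{mutually dependent} generators, as in the forest of the running example, where a generator suspends waiting for answers that it itself ultimately helps to produce. To make the induction sound I would actually prove a slightly generalized statement quantified over arbitrary generators $\calltwo{g}{c_g} \in Gen(h,true)$ rather than over the single top goal, so that the hypothesis applies to subgoals whose call constraint is more particular than $true$; the generator/consumer condition $cs \sqsubseteq c_g$ together with Corollary~\ref{cor:ans_compl} is exactly what lets this generalization go through. I would then have to argue that the forest's answer sets are computed as a least fixpoint that is genuinely \emph{reached}, i.e.\ that suspension and resumption keep feeding answers until closure, so that the forest's answer-set operator and $S_P^{\domain}$ share the same least fixpoint and no fixpoint answer is missed through a cyclic dependency. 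This saturation/closure argument, rather than any single resolution step, is where the real care is needed.
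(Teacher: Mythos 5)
Your proposal is correct and takes essentially the same route as the paper's proof: invoke standard completeness of clause resolution w.r.t.\ the fixpoint semantics, use Corollary~\ref{cor:ans_compl} to argue that replacing clause resolution by answer resolution with entailment loses no answers up to $\sqsubseteq$, and conclude via the answer-management strategy of Def.~\ref{def:tclp-forest}.\ref{item:ans_entail} that a covering answer survives in $Ans(h,true)$. Your explicit induction on the fixpoint iteration index and your saturation argument for mutually dependent generators are elaborations of details the paper compresses into its citation of Corollary~\ref{cor:ans_compl} and the declarative reading of Def.~\ref{def:tclp-forest}, not a different method.
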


\begin{proof}
  We know that for any answer derived from the bottom-up computation
  $\calltwo{h}{c} \in \lfp(S_P^{\domain}(\emptyset))$ there exists a successful
  derivation $\goaltwo{h}{true} \leadsto \goalset{\emptyset}{c}$ against the
  clauses of the program.  By Corollary~\ref{cor:ans_compl} if answer
  resolution is used instead of clause resolution, the results is also
  complete. Therefore, since the answer management strategy only keeps
  the most general answers
  (Def.~\ref{def:tclp-forest}.\ref{item:ans_entail}), we have that
  $\exists c' \in Ans(h,true). \ c \sqsubseteq c'$.
\end{proof}

\subsection{Termination}
\label{sec:termination}

The next definition is a fundamental property of some constraint
domains that plays a key role in the termination of the evaluation of
queries to TCLP programs~\cite{toman_theo_const_tabling}.

\begin{deff}[Constraint-compact]
  \label{def:constraint-compact}
  Let \domain be a constraint domain, and $D$ the set of all
  constraints expressable in \domain. Then $\domain$ is
  constraint-compact iff:
  
  \begin{itemize}[wide=0.5em, leftmargin =*, nosep, before =
    \leavevmode\vspace{-\baselineskip}]
  \item[--] for every finite set of variables $S$, and
  \item[--] for every subset $C \subseteq D$ such that
    $\forall c \in C.vars(c) \subseteq S$,
  \end{itemize}
  
  there is a finite subset $C_{fin} \subseteq C$ such that
  $\forall c \in C.\exists c' \in C_{fin}. c \sqsubseteq_{\domain} c'$
\end{deff}

Intuitively speaking, a constraint domain \domain is
% set of constraints $C$ to be 
constraint-compact if for any (potentially infinite) set of
constraints $C$ expressable in \domain using a finite number of
variables, there is a \emph{finite} set of constraints
$C_{fin} \subseteq C$ that covers $C$ in the sense of
$\sqsubseteq_{\domain}$.  In other words, $C_{fin}$ is as general as
$C$.
Additionally,
%% The notion of constraint-compact domains is related to compactness in
%% logic in the sense that
in a constraint-compact constraint domain, if an infinite set of constraints is
unsatisfiable, then there is a finite subset which is unsatisfiable,
therefore guaranteeing the existence of finite unsatisfiability
proofs.

\begin{exaa}
  \leftskip=20pt\parindent=0pt%
  ~\\
  The gap-order constraints~\cite{revesz1993closed} is a
  constraint-compact domain generated from the set
  $\pazocal{C}_{<Z} = \{ x < u : u \in A\} \cup \{u < x : u \in A\}
  \cup \{x + k < y : k \in Z^+ \}$ where $A \subset Z^+$ is
  finite. First, we see that the set $C_{x<u}$ (resp. $C_{u<x}$) of
  possible constraints of the form $x < u$ (resp. $u < x$), where
  $x \in S$, is finite, because $A$ and $S$ are finite. Therefore, it
  is trivial to define a finite set that covers
  $C_{x<u} \cup C_{u<x}$.  Second, for every pair of variables
  $x, y \in S$, the set $C_{x+k<y}$ of possible constraints of the form
  $x + k < y, \, k \in Z^+$ can be covered by a finite subset of
  itself. Although for a given pair of variables $x$, $y$ one can
  generate an infinite number of constraints $x + k_i < y$ choosing
  different $k_i \in Z^+$, the constraint $x + k_0 < y$ having the
  smallest $k_0$ among all the $k_i$ ($\forall k_i. k_0 \leq k_i$)
  subsumes all the rest of the constraints
  ($x + k_i < y \ \sqsubseteq \ x + k_0 < y$).  Note that $k_0$ always
  exists, since $k_i \in Z^+$, which has a minimum.  Since $S$ is
  finite, we only have to check it for two given $x$, $y$; we can
  repeat the same process for every pair of variables, since there is
  only a finite number of them.  Therefore, the infinite set
  $C_{x+k<y}$ has a finite subset $C_{fin} = \{x + k_0 < y\}$ which
  covers it ($C_{x+k<y} \sqsubseteq C_{fin}$).
\end{exaa}

\begin{exaa}
  \leftskip=20pt\parindent=0pt%
  ~\\
  The Herbrand domain is not constraint-compact.  Take the
  infinite set of constraints
  $C = \{X = a,\, X = f(a),\, X = f(f(a)),\, \ldots\}$.
  No finite subset of $C$ using only constraints in $C$ can cover
  $C$.
\end{exaa}

%% This definition establishes sufficient conditions for the termination
%% of any TCLP evaluation under a constraint-compact domain.  

The termination of TCLP Datalog programs under a top-down strategy
when the constraint system is constraint-compact is proven in
\cite{toman_theo_const_tabling}.
% (Def.~\ref{def:constraint-compact}).  
In that case, the evaluation will suspend the exploration of a call
whose constraint store is less general than or comparable to a previous
call.  Eventually, the program % , under an unconstrained initial query,
will generate a set of call constraint stores that can cover any
infinite set of constraints in the constraint domain, therefore finishing
evaluation.

Many TCLP applications require constraint domains that are not
constraint-compact because constraint-compact domains in general have
a limited expressiveness.  We refine here the termination
theorem~\cite[Theorem~23]{toman_theo_const_tabling} for Datalog
programs with constraint-compact domains to cover cases where the
constraint domain is not constraint-compact, but in which the program
evaluation generates only a constraint-compact subset of
all the constraints expressable in the constraint domain.

\begin{thee} [Termination in non constraint-compact domains]
  \label{th:termination}
  Let $P$ be a TCLP($\domain$) definite program and \calltwo{q}{c_q} a
  query. Then the TCLP execution for that query terminates iff:

  \begin{itemize}[wide=0.5em, leftmargin =*, nosep, before =
    \leavevmode\vspace{-\baselineskip}]
  \item For every goal \calltwo{g}{c_i} in the forest $\mf(q,c_q)$, the set
    \Cs is constraint-compact, where \Cs is the set of all the
    constraint stores $c_i$, projected and renamed w.r.t.\ the
    arguments of $g$.
    %% , s.t. \goaltwo{g}{c_i} is in the forest $\mf(q,c_q)$.
%
%%     \mclcommin{I see the result, but constraint compactness is defined
%%       on (atomic) constraints, not on constraint stores.  I understand
%%       it can be lifted to constraint stores, but it is not automatic?}

  \item For every goal \calltwo{g}{c_g} in the forest $\mf(q,c_q)$, the set \As is
    constraint-compact, where \As is the set of all the answer
    constraints $c'$, projected and renamed w.r.t.\ the arguments of
    $g$, s.t. $c'$ is a successful derivation of \calltwo{g}{c_i} in the forest
    $\mf(q,c_q)$.
  \end{itemize}
\end{thee}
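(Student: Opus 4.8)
The statement is a biconditional, so I would prove the two implications separately, using throughout that ``the TCLP execution terminates'' is equivalent to ``the forest $\mf(q,c_q)$ is finite'', i.e.\ it consists of finitely many trees (finitely many generators) each with finitely many nodes. The forward implication ($\Rightarrow$) is almost immediate: if the execution terminates the forest is finite, so for every goal $g$ only finitely many constraint stores $c_i$ and finitely many answer constraints $c'$ occur, whence \Cs and \As are \emph{finite} sets. A finite set of constraints over a finite set of variables is trivially constraint-compact (Def.~\ref{def:constraint-compact}): take $C_{fin}=C$ and use reflexivity of $\sqsubseteq_{\domain}$. This discharges ($\Rightarrow$); the reverse implication is the substantial one.

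For ($\Leftarrow$) I would first bound the number of trees. By Def.~\ref{def:tclp-forest}.\ref{item:call_entail} a call spawns a \emph{new} generator only when its projected store does not entail the store of any previously created generator for the same literal; otherwise it becomes a consumer. Fix a literal $g$ (there are finitely many predicates, hence finitely many such $g$ up to renaming) and let $G=\{c_1,c_2,\dots\}$ list the stores of its generators, a subset of \Cs, in creation order, so that $c_j\not\sqsubseteq_{\domain}c_i$ for all $i<j$. Assume $G$ is infinite. Since \Cs is constraint-compact and every $c\in G$ ranges over the finite variable set $vars(g)$, there is a finite $G_{fin}\subseteq G$ covering $G$; choosing any $c_j$ whose index exceeds all the (finitely many) indices appearing in $G_{fin}$ yields $c_j\sqsubseteq_{\domain}c_i$ for some earlier generator store $c_i\in G_{fin}$, contradicting the defining property of $G$. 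Hence each $g$ yields finitely many generators and the forest has finitely many trees.

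Next I would bound the answers and the size of each tree. By Def.~\ref{def:tclp-forest}.\ref{item:ans_entail}, $Ans(g,c_g)$ keeps exactly the $\sqsubseteq_{\domain}$-maximal (most general) elements of the answer set \As. Constraint-compactness of \As provides a finite covering set $A_{fin}$; for a maximal answer $m$ the covering element $a$ with $m\sqsubseteq_{\domain}a$, being itself an answer, must equal $m$ (otherwise $m$ would entail a strictly more general answer, contradicting maximality), so $Ans(g,c_g)\subseteq A_{fin}$ is finite. Consequently every node is finitely branching: clause resolution branches over finitely many program clauses (Def.~\ref{def:tclp-forest}.\ref{item:rest1}), constraint transitions are deterministic (Def.~\ref{def:tclp-forest}.\ref{item:rest2}), and answer resolution branches over the finite set $Ans(g,c_g)$. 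Because tabled literals are discharged in a single answer-resolution step rather than unfolded, no branch grows without bound through tabled recursion, so each tree has finite depth; by K\"onig's lemma each tree is finite.

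The main obstacle is that the trees are \emph{not} built independently: mutually dependent generators consume one another's answers (the two trees of Fig.~\ref{fig:tree04} feed each other), so I must show that the interleaved suspension/resumption fixpoint actually closes. I would cast the whole construction as a monotone fixpoint over the finitely many generators, each carrying its finite, $\sqsubseteq_{\domain}$-maximal answer set; since resuming a consumer can only add an answer not entailed by those already recorded, and there are only finitely many such maximal answers per generator, only finitely many resumptions can produce genuinely new answers. The process therefore reaches a fixpoint after finitely many steps, no further nodes are added, and the forest is finite, establishing termination. The two delicate points to get right are precisely this closure argument and the maximal-answer-versus-full-answer-set step above; a secondary care is that any recursion capable of causing an infinite branch is routed through tabled predicates (non-tabled literals being resolved by ordinary, terminating clause resolution), which is what lets the finite-answer bound control the depth of every tree.
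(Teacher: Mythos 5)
Your proposal is correct, and its core argument for the substantial ($\Leftarrow$) direction follows the same route as the paper's proof: finitely many literals up to renaming, then constraint-compactness of \Cs forces the set of generators to be finite (because a call spawns a new generator only when it entails no previous one), then constraint-compactness of \As forces each $Ans(g,c_g)$ to be finite, and finally finite branching (finitely many clauses, deterministic constraint steps, finitely many answers) yields a finite forest. You go beyond the paper's proof in three ways, all of them genuine improvements in rigor rather than a different method. First, you actually prove the ($\Rightarrow$) direction --- termination implies the forest is finite, hence \Cs and \As are finite and trivially constraint-compact --- whereas the paper states the theorem as an ``iff'' but its proof only establishes sufficiency. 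Second, where the paper asserts that ``at some point every new call will be entailed by some previous generator,'' you make this precise with the covering-set contradiction: the generator stores form a sequence with $c_j \not\sqsubseteq_{\domain} c_i$ for $i<j$, which cannot be infinite if a finite subset covers it; the same style of argument (via maximality of kept answers) tightens the paper's point that the answer sets are finite. Third, you confront two issues the paper passes over silently: that finite branching alone does not give finite trees (you invoke K\"onig's lemma and the observation that tabled literals are discharged in one answer-resolution step, so infinite depth could only come from non-tabled recursion --- an implicit assumption shared with the paper), and that mutually dependent generators resume each other, so one must argue the suspension/resumption process closes after finitely many genuinely new answers. Both proofs inherit the same caveat about non-tabled recursion being well-founded; yours at least names it.
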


\begin{proof}
\cite{toman_theo_const_tabling} proves termination by observing
that the $SLG^C$ rewriting rules can be applied only finitely many
times.  We extend this proof to ensure that the TCLP
forest generated is finite and therefore the program execution
terminates.
% under the conditions of Theorem~\ref{th:termination}:

\begin{enumerate}[wide=0.5em, leftmargin =*, nosep, before =
  \leavevmode\vspace{-\baselineskip}]
\item \label{item:literal} The execution can only generate a finite
  number of literals, up to variable renaming, because they are
  linearized (unifications take place in the constraints in the body)
  and the number of predicates in the program is finite.
\item \label{item:finiteII} The execution can only generate a finite
  number of TCLP forests $\bigsymbol{\tau}_P(g,c_g)$ because the
  number of possible literals is finite (point~\ref{item:literal}) and
  for each literal $g$, the set \Cs of its possible active constraint
  stores is constraint-compact. %
  That means that, for every subset of active constraint stores
  $C \sqsubseteq \Cs$, there exists a finite subset,
  $C_{fin} \subseteq C$ of possible most general calls, such that
  $\forall c \in C.\exists c' \in C_{fin}. c \sqsubseteq_{\domain}
  c'$. Therefore, at some point every new call will be entailed by
  some previous generator (this is checked in
  Def.~\ref{def:tclp-forest}.\ref{item:call_entail}).  %
%%   \mclcommin{This is not clear; being constraint compact is per
%%     constraint store, not for the set of all constraint stores. } %
%%   \mclcommin{That is critical and I think it is not well explained. I
%%     think it ought to go like: if we generate constraint stores that
%%     are constraint-compact, every constraint store have a finite
%%     subset (the condition that the set of variables is finite is
%%     ensured because constraint stores are projected on the variables
%%     of the call) that captures (in the sense of entailment) the
%%     constraint store.  The covering subset is what ultimately
%%     determines entailment, and since its size finite for a finite set
%%     of variables, the number of different stores for a given literal
%%     is finite if they are constraint-compact.} %
\item \label{item:setII} The set of answers $Ans(g,c_g)$
  (Def.~\ref{def:tclp-forest}.\ref{item:ans_entail}) is finite because
  the set of possible most general answer
  constraints is finite.  The justification similar to that in point~\ref{item:finiteII}.
\item The number of children from a node resolved against clauses in
  $P$ (Def.~%\ref{def:tclp-forest}.\ref{item:t-tab_root} and
  \ref{def:tclp-forest}.\ref{item:rest1}) is finite because the number of
  clauses in $P$ is finite.
\item The number of children from a node resolved by answer resolution
  (Def.~\ref{def:tclp-forest}.\ref{item:call_entail}) is finite
  because, by point~\ref{item:setII}, the set of answers $Ans(g,c_g)$
  is finite.
  % \item A derivation path is finite because in every recursive call
  %   path there exists at least one tabled predicate $t \in Tab_P$.
\end{enumerate}
%  Therefore, since the TCLP forest is finite, the execution terminates.
\end{proof}

The intuition here is that for every subset $C$ from the set of all
possible constraint stores \Cs that can be generated when evaluating
a call to $P$, if there is
% , with respect to $\sqsubseteq_{\domain}$, 
a finite subset $C_{fin} \subseteq C$ that covers (i.e., is as general
as) $C$, then, at some point, any call will be entailed by previous
calls, thereby allowing its suspension to avoid loops. % 
%% \mclcomm{Sounds good.  Only thing is that compactness refers to
%%   coverge by a \emph{set} of constraints, while we require one single
%%   (conjunction of) constraint(s)}%
%% \jahcomm{There are programs were we need two or more conjunction of
%%   constraints to entail all the calls.}%
Similarly, for every subset $A$ from the set of all possible answer
constraints \As that can be generated by a call, if there is a finite
subset $A_{fin} \subseteq A$ that covers $A$, then, at some point, any
answer will be entailed by a previous one, ensuring that the class of
answers $Ans(g,c_g)$ which entail any other possible answer returned
by the program is finite.\footnote{Note that a finite answer set does
  not imply a finite domain for the answers: the set of answers
  \code|Ans(q,c$_q$)=$\{$V>5$\}$| is finite, but the answer domain of
  \code|V| is infinite.}  Note that this result implies the classical
result that programs with the bounded depth term property always
finish under tabling with variant tabling, since the bounded depth
term property means that the number of possible constraints is finite
and therefore any constraint set covers itself.

%% even when the answer domain is infinite, e.g.\ the answer
%% constraint set $A_{fin} = \{V > 5\}$ if finite but the answer
%% domain of $V$ is infinite).

%% This theorem is based on the theorem~23 from
%% \cite{toman_theo_const_tabling}, . 

\begin{exaa}
  The Herbrand domain (with constants and function symbols) and
  syntactic equality is not constraint-compact, and therefore
  termination of TCLP(\h) programs is not guaranteed.  However, in the
  case of programs which have only constants,
  %% constraints on Herbrand atoms, and since there are only finitely
  %% many constants in a Datalog program,
  the number of constraints that can be generated is finite, and
  therefore termination is ensured.
%
  %% However, a Prolog (and a TCLP) program may include function
  %% symbols, which as we have seen is a non-compact constraint system
  %% when equipped with syntactical equality as unique constraint.
  Termination is also ensured (even with variant tabling) when a
  program can only generate terms with a bounded depth.  In this case,
  the number of distinct terms (and therefore of equality constraints)
  that can be generated is finite as well.
\end{exaa}

\begin{figure}
  \centering
  \begin{subfigure}[b]{.3\textwidth}
\begin{lstlisting}[style=MyProlog]
p(X) :-
    Y = f(X), 
    p(Y).
p(a).

\end{lstlisting}
    \caption{Program which finishes\newline under TCLP(\h).}
    \label{fig:terminate-p}
  \end{subfigure}
  \begin{subfigure}[b]{.3\textwidth}
\begin{lstlisting}[style=MyProlog]
nat(X) :-
    X.=.Y+1, 
    nat(Y).
nat(0).

\end{lstlisting}
    \caption{Natural numbers\\in TCLP(\Q).}
    \label{fig:terminate-nat}
  \end{subfigure}
  \begin{subfigure}[b]{.3\textwidth}
\begin{lstlisting}[style=MyProlog]
nat_k(X) :-
    X.=.Y+1, 
    nat_k(Y).
nat_k(0).
nat_k(X) :- X.>.1000.
\end{lstlisting}
    \caption{Describing infinitely many numbers in TCLP(\Q).}
    \label{fig:terminate-nat-k}
  \end{subfigure}
  \caption{TCLP programs under \h and \Q.}
\end{figure}

\begin{exaa}
  Fig.~\ref{fig:terminate-p} shows a program which loops
  in tabled Prolog and under \emph{variant} tabling.  The unification
  appears explicitly in the body for clarity.
  % to make it apparent in the constraint stores.
  Although CLP(\h) is not constraint-compact, the constraints
  generated by that program under the query \code{?-p(X)} can make
  it finish.  Let examine its behavior from two points of view:

  \begin{description}
    [leftmargin=.7cm,labelindent=.7cm]
  \item[Compactness of the call constraint stores] %
    The set of all the constraint stores %
%    \mclcommin{Clarify we look at answer constraints per goal} %
    generated for the predicate \code{p/1} under the query
    \callcode{p(X)}{true} is
    $C_{\mt{p(V)}} = \{\mt{true,\, V=f(X),\, V=f(f(X)),\,
      \dots}\}$.\footnote{The syntax $C_{\mt{p(V)}}$ means that (i) we
      are projecting all the calls to predicate \code|p/1| on the
      variables that call, and (ii) we are renaming these variables to
      be \code|V| in all the calls.  We could associate with every
      constraint store the names of the variables in the call in order
      to be able to compare different constraints stores (which is
      unnecessary after projection if there is only one variable in
      the call, but it would be needed if more than one variable is
      involved).  In order to avoid such an overload, and without loss
      of generality, we preferred to project and rename to a unique
      set of variables.}  It is constraint-compact because for every
    subset $C$ there is a finite set, e.g.  $C_{fin} = \{\mt{true}\}$,
    that covers $C$. %
%%  \mclcommin{Then every program could finish - something is missing
%%    here.}
%%  \jahcommin{There is a second condition regarding the answers.}
%%
  \item[Compactness of the answer constraints] %
    Additionally, the set of all answer constraints for the query,
    $A_{\text{\callcode{p(V)}{true}}} = \{\mt{V = a}\}$, is also
    constraint-compact because it is finite. Since both are
    constraint-compact, the execution terminates.%
    
  \item[Suspension due to call entailment] The first recursive call is
    \callcode{p(Y$_1$)}{Y$_1$=f(X)} and its projected and renamed constraint
    store is entailed by the initial store: \code{V=f(X)}
    $\sqsubseteq$ \code{true}.
    Therefore, TCLP evaluation suspends the recursive call, shifts
    execution to the second clause, and generates the answer \code{X=a}.
    This answer is given to the suspended recursive call, results in
    the inconsistent constraint store \code{Y$_1$=f(X)$\ \land\ $Y$_1$=a}, and the
    execution terminates.%
  \end{description}

\end{exaa}

\begin{exaa}
  Using the previous example (Fig.~\ref{fig:terminate-p}) under the
  query \code{?-p(a)}, the set of all the generated constraint stores
  is $C_{\mt{p(V)}} = \{\mt{V=a,\, V=f(a),\,V=f(f(a)),\,\dots}\}$. It
  is not constraint-compact and the execution does not terminate. Let
  us examine its behavior:
  \\[-1.5em]
    \begin{description}
      [leftmargin=.7cm,labelindent=.7cm]
    \item[The call constraint stores are not compact] The first recursive call is %
      \callcode{p(Y$_1$)}{X=a $\ \land\ $ Y$_1$=f(X)} %
      and the projection of its constraint store, \code{Y$_1$=f(a)},
      is not entailed by the initial one after renaming:
      \code{V=f(a)} $\not\sqsubseteq$ \code{V=a}. Then this call is evaluated
      and produces the second recursive call, %
      \callcode{p(Y$_2$)}{X=a$\ \land\ $Y$_1$=f(X)$\ \land\ $Y$_2$=f(f(X))}.  Its
      projected constraint store, \code{Y$_2$=f(f(a))}, is not
      entailed by any of the previous constraint stores, and so on
      with the rest of the recursive calls.  Therefore, the evaluation
      loops without terminating.
    \end{description}
  \end{exaa}

  Let us show the termination properties of the examples used 
   in~\cite{TCLP-tplp2019}. These examples show under what
  conditions programs would terminate even if the constraint domain
  is not constraint-compact.

%%%%%%% Natural numbers ?- X \#< 10, nat(X).  %%%%%%%%%%%%%%%%%%%%%%%
  \begin{exaa}
    Fig.~\ref{fig:terminate-nat} shows a program which
    generates all the natural numbers using TCLP(\Q). Although CLP(\Q)
    is not constraint-compact, the constraint stores generated by that
    program for the query \code{?-X.<.10,nat(X)} are
    constraint-compact and the program finitely finishes. Let us look at
    its behavior from two points of view:
    \\[-1.5em]
    \begin{description}
    [leftmargin=.7cm,labelindent=.7cm]
  \item[Compactness of the call constraint stores and answer
    constraints] %
    The set of all constraint stores generated for the predicate
    \code{nat/1} under the query \callcode{nat(X)}{X<10} is
    $C_{\mt{nat(V)}} = \{\mt{V<10,\, V<9,\, \dots,\, V < -1, V < -2,\,
      \dots}\} $. It is constraint-compact because every subset
    $C \in C_{\mt{nat(V)}}$ is covered by $C_{fin} = \{\mt{V<10}\}$. %
    The set of all possible answer constraints for the query,
    $A_{\text{\callcode{nat(V)}{V<10}}} = \{\mt{V=0,\, \dots,\,
      V=9}\}$, is also constraint-compact because it is
    finite. Therefore, the program terminates.
  \item[Suspension due to call entailment] The first recursive call is
    \callcode{nat(Y$_1$)}{X<10$\ \land\ ${X=Y$_1$+1}} and the projection
    of its constraint store after renaming is entailed by the initial
    one since \code{V<9} $\sqsubseteq$ \code{V<10}.  Therefore, TCLP
    evaluation suspends in the recursive call, shifts execution to the
    second clause and generates the answer \code{X=0}. This answer is
    given to the recursive call, which was suspended, produces the
    constraint store \code{X<10$\ \land\
      $X=Y$_1$+1$\ \land\ $Y$_1$=0}, and generates the answer
    \code{X=1}. Each new answer \code{X$_n$=n} is used to feed the
    recursive call. When the answer \code{X=9} is given, it results in
    the (inconsistent) constraint store
    \code{X<10$\ \land\ $X=Y$_1$+1$\ \land\ $Y$_1$=9} and the
    execution terminates.
%%  constraint solver fails
%%     and the execution terminates.
  \end{description}

\end{exaa}

  % The following examples are used to highlight the importance of the
  % constraint-compactness property not only for the constraint store
  % set of the calls but also for the answer constraint set.

%%%%%%% Natural numbers ?- X \#> 0, X \#< 10, nat(X).  %%%%%%%%%%%%%%%%%%%%%%%
\begin{exaa}
  The program in Fig.~\ref{fig:terminate-nat} does not
  terminate
  %% . While for the query \code{?- X \#< 10, nat(X)} the
  %% execution terminates, let us look at its behavior
  for the query \code{?-X.>.0,X.<.10,nat(X)}. Let us examine its
  behaviour: %
  %% \mclcomm{Maybe we should make conjunctions explicit for
  %% clarity?}%
  %% \jahcomm{Right}%
  %% \mclcomm{Not sure I follow.  Compactness requires a finite set of
  %% variables. }%
  %% \jahcomm{I adapted the definition...}
  \\[-1.5em]
  \begin{description}
    [leftmargin=.7cm,labelindent=.7cm]
  \item[The call constraint stores are not compact] The set of all
    constraint stores generated by the query %
    \callcode{nat(X)}{X>0$\ \land\ $X<10} is
    $C_{\mt{nat(V)}} = \{\mt{V>0 \land V<10},\,
    \mt{V>-1 \land V<9},\, \dots,\,
    \mt{V>-n \land V<(10-n)},\, \dots\}$, which it is not
    constraint-compact.  Note that \code{V} is, in successive calls,
    restricted to a sliding interval \code{[k,k+10]} which starts at
    \code{k=0} and decreases \code{k} in each recursive call.  No
    finite set of intervals can cover any subset of the possible
    intervals.

  \item[The evaluation loops] The first recursive call is
    \callcode{nat(Y$_1$)}{X>0$\ \land\ $X<10$\ \land\ $X=Y$_1$+1} and the
    projection of its constraint store is not entailed by the initial
    one after renaming since
    \code{(V>-1$\ \land\ $V<9)} $\not\sqsubseteq$ \code{(X>0$\ \land\ $X<10)}. Then
    this call is evaluated and produces the second recursive call,
    \callcode{nat(Y$_2$)}{X>0$\ \land\ $X<10$\ \land\ $X=Y$_1$+1$\ \land\ $Y$_1$=Y$_2$+1}.
    Again, the projection of its constraint store,
    \code{Y$_2$>-2$\ \land\ $Y$_2$<8}, is not entailed by any of the
    previous constraint stores, and so on.  The evaluation therefore
    loops.
  \end{description}
\end{exaa}

%%%%%%% Natural numbers ?- nat(X).  %%%%%%%%%%%%%%%%%%%%%%%
  \begin{exaa}\label{exa:termination-1}
    The program in Fig.~\ref{fig:terminate-nat} does not terminate
    with the query
    %% . While for the bquery \code{?- X .<. 10, nat(X)} the
    %% execution terminates, and for the query \code{?- X .>. 0, X
    %% .<. 10, nat(X)}, the execution loops, let us look at its
    %% behavior under the query
    \code{?-nat(X)}.  
    \\[-1.5em]
    \begin{description}
      [leftmargin=.7cm,labelindent=.7cm]
    \item[Compactness of the call constraints stores] %
      The set of all constraint stores generated by the query
      \callcode{nat(X)}{true} is $C_{\mt{nat(V)}} =
      \{\mt{true}\}$. The set $C_{\mt{nat(V)}}$ is constraint-compact
      because it is finite.%

      % \mclcommin{Shouldn't it be $C_{\text{\callcode{nat(V)}{true}}}$
      %   instead of $C_{\mt{nat(V)}}$}%
      % \jahcommin{No, it is correct. The set $C_g$ represent the
      %   constraint stores of a literal $g$.}%
      % \jahcommin{But the set was wrong.  It only contains
      %   \code{true}.}

%%     \mclcomm{But it generates constraints in the body?}%
%%     \jahcomm{I added a footnote}%
%%     \mclcomm{I think I saw it but it disappeared?}
%%     \jahcomm{Right}
    \item[The answer constraints are not compact] However, the answer
      constraint set
      $A_{\text{\callcode{nat(V)}{true}}} = \{\mt{V=0,\, V=1,\,
        \dots,\, V=n,\, \dots}\}$ is not constraint-compact, and
      therefore the program does not terminate.
    \item[The evaluation does not terminate] %
      The first recursive call is \callcode{nat(Y$_1$)}{X=Y$_1$+1} and the
      projection of its constraint store\footnote{The equation in the
        body of the clause \code|X=Y$_1$+1| defines a relation between the
        variables but, since the domain of \code|X| is not restricted, its
        projection onto \code|Y$_1$| returns no constraints (i.e.,
        \code|Proj(Y$_1$, X=Y$_1$+1) = true|).} is entailed by the initial
      store. Therefore, the TCLP evaluation suspends the recursive
      call, shifts execution to the second clause, and generates the
      answer \code{X=0}.  This answer is used to feed the suspended
      recursive call, resulting in the constraint store
      \code{X=Y$_1$+1$\ \land\ $Y$_1$=0} which generates the answer \code{X=1}. Each
      new answer \code{X=n} is used to feed the suspended recursive call.
      Since the projection of the constraint stores on the call
      variables is \code{true}, the execution tries to generate infinitely
      many natural numbers.
      %% \mclcommin{How does this relate with constraint compactness?}
    \end{description}

  \end{exaa}

  \begin{exaa}
    \label{exa:nat-k}
      Unlike what happens in pure Prolog/variant
      tabling, adding new clauses to a program under TCLP can make it
      terminate.\footnote{This depends on the strategy used by the
        TCLP engine to resume suspended goals.  An implementation that
        gathers all the answers for goals that can produce results
        first, and then these answers are used to feed suspended
        goals,  makes the exploration of the forests proceed in a
        breadth-first fashion.} As an example,
      Fig.~\ref{fig:terminate-nat-k} is the same as
      Fig.~\ref{fig:terminate-nat} with the addition of
      the clause \code{nat_k(X):-X.>.1000}.  Let us examine its behavior
      under the query \code{?-nat_k(X)}:
   \begin{description}
      [leftmargin=.7cm,labelindent=.7cm]
    \item[Compactness of call/answer constraint stores] %
      The set of all constraint stores generated remains
      $C_{\mt{nat\_k(V)}} = \{\mt{true}\}$. But the new clause makes  the
      answer constraint set become
      $A_{\text{\callcode{nat_k(V)}{true}}} = \{\mt{V=0},\, \mt{V=1},\, \dots,\, \mt{V=n},\,
      \dots,\, \mt{V>1000},\, \mt{V>1001},\, \dots,\, \mt{V>n},\, \dots\}$, which is
      constraint-compact because a constraint of the form \code{V>n}
      entails infinitely many constraints, i.e.\ it covers the infinite
      set \code|{V=n+1,$\dots$,V>n+1,$\dots$}|. Therefore, since
      both sets are constraint-compact, the program terminates.
      
    \item[First search, then consume] %
      The first recursive call \callcode{nat_k(Y$_1$)}{X = Y$_1$+1} is
      suspended and the TCLP evaluation shifts to the second clause
      which generates the answer \code{X=0}.  Then, instead of feeding
      the suspended call, the evaluation continues the search and
      shifts to the added clause, \code{nat_k(X):-X.>.1000}, and
      generates the answer \code{X>1000}.  Since no more clauses
      remain to be explored, the answer \code{X=0} is used, generating
      \code{X=1}. Then \code{X>1000} is used, resulting in the
      constraint store \code{X=Y$_1$+1$\ \land\ $Y$_1$>1000}, which
      generates the answer \code{X>1001}. %
%      \mclcommin{But the suspended call could be fed, right?}
      However, % during the answer entailment phase,
      \code{X>1001} is
      discarded because \code{X>1001} $\sqsubseteq$
      \code{X>1000}. Then, one by one each answer \code{X=n} is used,
      generating \code{X=n+1}. But when the answer \code{X=1000} is
      used, the resulting answer \code{X=1001} is discarded % during the
      % answer entailment phase,
      because \code{X=1001} $\sqsubseteq$
      \code{X>1000}. At this point the evaluation terminates because
      there are no more answers to be consumed. The resulting set of
      answers is %
      \code|Ans(nat_k(X),true) = {X=0,X>1000,X=1,$\dots$,X=1000}|.
    \end{description}
    
  \end{exaa}

\section{The Role of Projection in TCLP}
\label{sec:import-prec-impl}

The detection of more particular calls and answers is
performed % during the call and answer entailment phases
by checking entailment of the current constraint store of calls
(resp., answers) against the projected constraint store of a previous
call.  Some previous frameworks~\cite{tchr2007,bao2000} did not
implement a precise projection due to performance and implementation
issues.  Given that in some cases approximate projections can be more
efficient and/or easier to implement, it is worth exploring how
relaxing projection impacts soundness and completeness.
%% and termination.%
%% \mclcomm{We don't discuss termination.}
%
Let $c$ be a constraint store and let $c_s$ be a projection of $c$ on
some set of variables $S$.\footnote{In all cases
  % we are projecting w.r.t.\ a set of variables, and
  the projected constraint store $c_S$ only has the variables in $S$ in
  common with the original store $c$.}
Let us also recall (Def.~\ref{def:valuation}) that a valuation is a
mapping from variables to domain constants and that a solution for a
constraint is a valuation that is consistent with the interpretation
of the constraint in its domain.  
We distinguish three possible projection variants:

\begin{description}[noitemsep]
\item [Precise projection (\textmd{denoted} $c \equiv c_s$)] $c_s$ is a projection of
  $c$ over some set of variables $S$, as defined in Def.~\ref{def:projection}.
% Any solution of $c$ is
%   also a solution of $c_s$ and any valuation
% % $v$ over $S$ 
%   which is a solution of $c_s$ is a partial solution of $c$, e.g., it
%   is consistent with the interpretation of $c$ in its domain.
\item[Over-approximating projection (\textmd{denoted}  $c \sqsubseteq c_s$)] The
  projected constraint $c_s$ is more general than the precise
  projection, e.g., some solutions for $c_s$ are not partial solutions
  for $c$.
% It is difficult to reason about constraints; depend on whether they
% are or not syntactically comparable
%
% (e.g., some constraints were removed)
  Any solution for $c$ is still a solution for $c_s$.
\item[Under-approximating projection (\textmd{denoted} $c \sqsupseteq c_s$)] $c_s$ is
  less general than the precise projection, e.g., there may be
  solutions for $c$ that are not solutions for $c_s$.
  % projection $c_s$ is more particular than the constraint store $c$
  % (e.g., adding some constraint and/or reducing the domain of the
  % variable).
  Any solution of $c_s$ is still a (partial) solution for $c$.
\end{description}

Let us explain how these projection variants interact with the three
phases of the operational semantics described in
Section~\ref{sec:operational}:

\begin{itemize}
\item During the call entailment check (see
  Def.~\ref{def:tclp-forest}.\ref{item:call_entail}), if a new goal
  \calltwo{t}{c}, where $t$ is a tabled literal, does not entail a
  previous generator then, a new TCLP forest $\mf_P(t, c_s)$ is
  created and \calltwo{t}{c_s} is a new generator, where
  $c_s = Proj(vars(t),c)$.  Therefore, depending on the projection
  variant used, we have that:

  \begin{itemize}
  \item Using a precise projection, as already shown, the
    evaluation of the generator \calltwo{t}{c_s} would generate the same 
    answers as the evaluation of the goal \calltwo{t}{c}.
  \item Using an over-approximating projection, the generator
    \calltwo{t}{c_s} is more general than \calltwo{t}{c}, and
    therefore the evaluation of \calltwo{t}{c_s} may generate answers
    that are not consistent with the constraint store $c$. Note,
    however, that these answers will be filtered: when they are
    recovered and applied to a consumer (or to their generator)
    they will be checked for consistency against the constraint store
    of the call for which they are used.
  \item Using an under-approximating projection, the generator
    \calltwo{t}{c_s} is more particular than the goal \calltwo{t}{c},
    and, therefore, its evaluation may not generate answers that
    \calltwo{t}{c} would. Note that all of them would be consistent
    with $c$.
  \end{itemize}

  On the other hand, if a new goal \calltwo{t}{c'} entails a previous
  generator \calltwo{t}{c_s}, the goal \calltwo{t}{c'} is as usual
  marked as a consumer and would consume the answers generated by
  \calltwo{t}{c_s}.
  %
%%   \mclcomm{This is as usual, right?  There is nothing new - only that
%%     $c_s$ hay be an over- or under-approximation.}\jahcomm{right}

\item During the answer entailment check
  (Def.~\ref{def:tclp-forest}.\ref{item:ans_entail}), the final
  constraint store $a$ of each successful derivation of the evaluation
  of a generator \calltwo{t}{c_s} is projected to obtain the answer
  constraint $a_s$, i.e., $a_s = Proj(vars(t),a)$. Depending on the
  projection variant used we have that:

  \begin{itemize}
  \item Using a precise projection (denoted $a \equiv a_s$), as
    already proved, the resulting set of answer constraints for a
    generator does not add or exclude any valuation w.r.t.\ the set of
    its final constraint stores.
  \item Using an over-approximating projection (denoted
    $a \sqsubseteq a_s$), % an answer constraint
    the projected answer constraint $a_s$ may add valuations that are
    not consistent with the final constraint store $a$.
  \item Using an under-approximating projection (denoted
 $a_s \sqsubseteq a$), % an answer constraint,
    $a_s$ may exclude valuations that are contained in the 
    constraint store $a$.
  \end{itemize}

\item During the application of the answers
  (Def.~\ref{def:tclp-forest}.\ref{item:rest2}), each answer constraint
  $a_s$ obtained during the evaluation of a generator is added to the
  constraint store $c$ of the goal that created the generator and the
  goals that were marked as consumers of that generator. If $a_s$ is
  consistent with $c$, i.e., $\domain \vDash c \land a_s$ the
  evaluation continues under the constraint store $c \land a_s$.
  Otherwise, it fails and the next answer constraint is retrieved.
\end{itemize}

%% \mclcommin{If $c_s$ is precise, all solutions to \calltwo{g}{c_s} should
%%   be solutions to \calltwo{g}{c}, isn't that so?}\jahcommin{yes}

We will now summarize how using non-precise projections impacts the
soundness and completeness of TCLP.
%% (i.e., using different combination of projection variants in the
%% entailment phase for calls and answers) 
%
% with entailment 
%% w.r.t.\ the least fixed point semantics assuming termination.
%
Tables~\ref{tab:ss} and~\ref{tab:cc} summarize whether soundness and
completeness (resp.) are preserved when using over- and
under-approximations for the projections in the call (column) and
answer (row) entailment check:
%
%  combinations with which
% TCLP with entailment preserves soundness and Table~\ref{tab:cc} shows
% the combinations with which it preserves completeness:
%
`\ok' in a location of each table means that the corresponding
combination of projection variants
% for the call and answer entailment phases
preserves soundness (resp., completeness), while
%% 
%% 
%% soundness (resp., completeness) is preserved using
%% the projection indicated in the columns during the call entailment
%% phase and the projection indicated in the row during answer entailment
%% phase,
%
`\no' means the opposite.
%  that TCLP does not preserve soundness (resp.,
% completeness) w.r.t.\ the least fixed point using that combination of
% projections.
%
As expected, some combinations do not preserve soundness /
completeness.  Let us give an intuition behind these tables.

%% \subsection{Soundness and Completeness of Approximate Projection
%%   Combinations}
%% \label{sec:soundness-completeness-approx}
%% 

\begin{itemize}
\item In the top row of Table~\ref{tab:ss}, the only combination that
  may be unsound is the one that uses an over-approximation for the
  call projection: the answers may be more general than what a precise
  approximation would produce.  However, as mentioned before, when an
  answer is applied to a goal, a conjunction with the call constraint
  of that goal is made.  That balances the use of an
  over-approximation in the call.  This is in fact similar to the case
  of a consumer that uses answers from a more general generator.

\item The combinations in the middle row of Table~\ref{tab:ss} are not
  sound because over-approximations can produce answer constraints
  that allows for more valuations than a correct solution.

\item The cases in the bottom row of Table~\ref{tab:ss} are clearly
  sound as the projection of the answer constraints is more
  restrictive than a precise projection, and therefore it cannot
  introduce unwanted solutions.

\item The combinations in the rightmost column and the bottom-most row
  of Table~\ref{tab:cc} may not be complete because they either
  restrict the projected store for a call or they restrict the
  answers.  In both cases, solutions may be missed.

\item The rest of the cases in Table~\ref{tab:cc} may use projections
  more relaxed than a precise one, so additional solutions can be
  generated, but no solution should be removed.
\end{itemize}

\begin{table}
  \caption{Combinations of precise, over- and under- approximation
    (`$\equiv$', `$\sqsubseteq$' and `$\sqsupseteq$') \\for the call
    and answer entailment check.}

  \begin{subtable}{.43\linewidth}
    \caption{Soundness preservation.}
    \label{tab:ss}
    \centering
    \begin{tabular}{p{1.2cm}|ccc}
      \toprule
        & $c \equiv c_s$ & $c \sqsubseteq c_s$ & $c \sqsupseteq c_s$ \\[1em]
      \midrule
      $a \equiv a_s$          & \ok &  \ok &  \ok \\[1em]
      $a \sqsubseteq a_s$     & \no &  \no &  \no \\[1em]
      $a \sqsupseteq a_s$     & \ok &  \ok &  \ok \\[1em]
      \bottomrule    
    \end{tabular}
  \end{subtable}
  \begin{subtable}{.05\linewidth}
    ~
  \end{subtable}
  \begin{subtable}{.43\linewidth}
    \caption{Completeness preservation.}
    \label{tab:cc}
    \centering
    \begin{tabular}{p{1.2cm}|ccc}
      \toprule
        & $c \equiv c_s$ & $c \sqsubseteq c_s$ & $c \sqsupseteq c_s$ \\[1em]
      \midrule
      $a \equiv a_s$          & \ok &  \ok &  \no \\[1em]
      $a \sqsubseteq a_s$     & \ok &  \ok &  \no \\[1em]
      $a \sqsupseteq a_s$     & \no &  \no &  \no \\[1em]
      \bottomrule    
    \end{tabular}
  \end{subtable}
\end{table}

%
%% excluding the use
%% of precise projection for calls and answers (as we already proved in
%% Theorems~\ref{the:sound} and \ref{the:complete}), and the use of
%% over-approximating projection in calls and precise projection answer
%% other 
%
Some approximate projections can be more efficient and/or easier to
implement than precise projections, and that justifies their use in
specific scenarios.
For brevity, let us comment on the combinations that preserve
soundness and completeness, $\equiv / \equiv$ and
$\sqsubseteq / \equiv$, and a combination that over-approximates the
answers while using a precise projection in the calls,
$\equiv / \sqsubseteq$:
%% only the combination $\sqsubseteq / \equiv$ (besides precise
%% projections) preserves soundness and completeness.
%
%% Let us know comment briefly these combination and the
%% $\equiv / \sqsubseteq$ combination, because they have been used in
%% previous implementations of TCLP:

\begin{itemize}

\item \ensuremath{\equiv / \equiv}:
  Precise projection `$\equiv$' in the call and answer entailment
  check. This is optimal in the sense that it guarantees soundness
  and completeness, removes redundant answers, and reduces the search
  space. It has been used % in the interfaces presented
  in~\cite{TCLP-tplp2019}.

\item $\sqsubseteq / \equiv$:
  Over-approximate projection `$\sqsubseteq$' for the calls and
  precise projection `$\equiv$' for the answers. 
  %% As we already mentioned, this combination guarantees soundness
  %% and completeness.
  In this case, generators may generate answers that a precise
  projection would not, since
  they start with a more relaxed constraint store (which can turn
  terminating queries into non-terminating ones).  
  %% In the case of termination, it may produce more answers than
  %% required, given the initial goal constraint store.
  This of course preserves completeness.
  Soundness is preserved because answer constraints that are not
  consistent with the initial goal constraint store $c$ will be discarded.
%%   \mclcomm{It is strange to say this, because we write before that
%%     this combination preserves completeness.}\jahcomm{explained}

\begin{exaa}
  \leftskip=0pt\parindent=0pt%
  ~\\
  Call abstraction~\cite{tchr2007} is an extreme example, where the
  constraint store associated with the tabled call is not taken into
  account for the execution of the call (i.e., the projection of a
  constraint store is always the constraint \code{true}). Therefore, a
  generator with \code{true} as constraint store will be entailed by
  any subsequent call because $c \sqsubseteq true$ for any constraint
  $c$. As mentioned above (see Example~\ref{exa:termination-1}), this
  loses several benefits of tabling with constraints because we have
  to compute all the possible results for an unrestricted call and
  then filter them through the constraint store active at call-time.
  However, soundness is preserved.
\end{exaa}

\item $\equiv / \sqsubseteq$:
  Precise projection `$\equiv$' for the calls and over-approximate
  projection `$\sqsubseteq$' for the answers.  This combination is
  relevant because applications such as program analyzers based on
  abstract interpretation can be seen as performing an execution in an
  abstract domain that over-approximates the values of the concrete
  domain to guarantee
  termination. % and the completeness of the analysis.
%
  %% w.r.t.\ to the real program execution (i.e.,
%
  This over-approximation can be implemented with a constraint system
  that reflects the operations of abstract domain and whose answer
  projections are as well over-approximated.
%%  during the answer entailment
%%   phase).
  %
  Such an over-approximation can increase performance because a more
  general answer would be more frequently entailed
% (during the   answer entailment phase)
  by other answers, reducing the number of answers stored and the
  number of resumptions.
  
  However, using an over-approximation in the answer projections may
  make answer resolution to lose precision arbitrarily.
  When an answer constraint $a$ for a generator \calltwo{t}{c_s} is
  projected to obtain the over-approximated answer constraint $a_s$,
  this answer is saved in case it can be reused later on.

  When a (more concrete) consumer \calltwo{t}{c'} performs answer
  resolution consuming $a_s$, the resulting answer would be
  $c' \land a_s$.  Depending on how the over-approximation is
  performed, $c' \land a_s$ can be arbitrarily less precise (or even
  incomparable) than what would have been the result of executing
  $(t, c')$ against program clauses and then abstracting it.  However,
  there are some cases where by putting some conditions on when an
  answer is reused, this problem can be worked around.

  \begin{exaa}
    \leftskip=0pt\parindent=0pt%
    ~\\
    The implementation of PLAI with TCLP presented
    in~\cite{arias19:ciaopp-tclp} is an example of this option. In
    that paper, an abstract interpreter is built using TCLP where the
    abstract domain and its operations are modeled using a constraint
    system.  One of these computes the lowest upper bound of different
    abstract substitutions resulting from the analysis of each clause
    of a predicate, to return the abstract substitution corresponding
    to the predicate.  If $a_1$ and $a_2$ are the abstract
    substitutions at the end of the bodies of two (normalized) clauses
    $p_1$ and $p_2$, one would like to calculate
    $Proj(var(p), a_1 \lor a_2)$, where $Proj$ may be an
    overapproximation.  When answer substitutions for each clause are
    projected and stored separately, composing them is done by
    computing $Proj(var(p),a_1) \sqcup Proj(var(p),a_2)$, which can be
    less precise than $Proj(var(p), a_1 \lor a_2)$.
    %
    % \mclcommin{I tried to fill in the missing pieces for those who did
    %   not know the paper on TCLP and AbsInt.  But am not sure this
    %   explanation is necessary. The only relevant point here is that
    %   the abstract substitution for $p$ can be arbitrarily more
    %   general than what what it could be.}
    %
    That makes the predicate-level abstract substitution for $p$ to
    possibly be an overapproximation of the more precise abstract
    version.

    The tabled abstract substitution for goal $p$ can be retrieved and
    used to compute the exit substitution for another goal $p'$ when
    $p' \sqsubseteq p$, using answer resolution. In that case, the
    exit substitution for $p'$ can be arbitrarily less precise than
    what would have been obtained by analyzing directly $p'$ using
    clause resolution and then abstracting.  We worked around this
    issue by reusing substitutions only in the case that $p$ and $p'$
    correspond to the same point in the lattice, i.e., when their
    entry substitutions are (semantically) equal modulo variable
    renaming.  This ensures that the abstract substitution for $p$ can
    be used for $p'$ without incurring in additional loss of
    precision, because the analysis results for $p'$ and $p$ should be
    the same.
         
  %
% \textcolor{blue}{ \textbf{Previous version:}
%   Considering that the answer set of a generator \calltwo{t}{c} is a
%   disjunction of the abstract success substitutions $a_i$, the merging
%   operator can be considered as an over-approximate projection, i.e.,
%   $Proj(var(t),a_1 \lor a_2) \sqsubseteq Proj(var(t),a_1) \sqcup
%   Proj(var(t),a_2)$.
% %  
%   Additionally, to ensure that using TCLP applying answer resolution
%   does not affect the expected precision of the results by applying
%   the over-approximation of the answers, the TCLP interface
%   implemented in~\cite{arias19:ciaopp-tclp} uses semantic equivalence
%   during the call entailment check.
%   %
%   In abstract interpretation, for any pair of answers $a_1$, $a_2$ of
%   a generator \calltwo{t}{c} we have that $a_1 \sqsubseteq c$,
%   $a_2 \sqsubseteq c$, and $a_1 \sqcup a_2 \sqsubseteq c$,
%   %
%   therefore, for any consumer \calltwo{t}{c'} that is semantically
%   equivalent to \calltwo{t}{c} we have that
%   $a'_s \equiv c' \land a_1 \sqcup a_2$, where $a'_s$ corresponds to
%   the over-approximation of the answer constraint obtained from the
%   evaluation of the consumer \calltwo{t}{c'} against the clauses of
%   the program.}

  \end{exaa}

\end{itemize}

\noindent
To the best of our knowledge, there are no examples where
under-approximate projections `$\sqsupseteq$' are used.  However,
since they preserve soundness (except when an over-approximation is
used for answer projection, which is neither sound not complete), they
can be useful in scenarios where the existence of a solution is enough
to answer a question.  This would the case, for example, for program
verification: a solution for a query to a TCLP program that uses
underapproximations and looks for counterexamples to the correctness
of a program would demonstrate the existence of an error in the
program, even if the answer only shows a subset of the domain of the
variables for which the program exhibits a wrong behavior.

%% As this
%% projection preserves soundness it can be useful in verification.  The
%% intuition is that we will be able to restrict the domain of the
%% variables during the evaluation without modifying the program. As a
%% consequence of the restriction in the variable domain, the evaluation
%% loses completeness, i.e., it does not generate all the possible
%% answers.
%% %
%% \mclcomm{I am not sure what this is referring to?}
%% %
%% But, if during an evaluation that reduces the domain of variables we
%% find a counterexample, it will be sound.
%
%% The intuition is that, as we mentioned in
%% Section~\ref{sec:tclp_syntax}, TCLP deals with definite programs,
%% i.e., programs without negation, and the loss of completeness does not
%% imply loss of soundness.
%% %
%% \mclcomm{We did not mention negation / non-monotonicity in this paper,
%%   so explaining what this sentence refers to is necessary.}
%% \jahcomm{I think that next sentence may generate confusion.}
%% % 
%% That is not the case considering programs with negation where if an
%% answer, e.g., \code{p(a)}, is not generated a clause involving its
%% negation, e.g., \code{q(X):-not p(X)}, would generate additional
%% answers, i.e., \code{q(a)}.
%

\section{Conclusions}

We have extended the theoretical basis of tabled constraint logic
programming for a top-down execution. We have characterized the
properties that the constraint solver should holds in order to
guarantee soundness and completeness. For non constraint-compact
constraint systems, we define sufficient conditions for queries to
terminate.
For constraint domains without a precise implementation of the
projection of constraint stores, we evaluate how relaxing the
projection impacts soundness, completeness, and termination.

From our point of view, the new formalization in terms of soundness,
completeness and termination would facilitate the implementation of
new tabled constraint logic programming systems and their integration
with a lager number of constraint domain (e.g., constraint solvers over
finite domains).

\bibliographystyle{acmtrans}
%\bibliography{../bibtex/clip/clip,../bibtex/clip/general}

\begin{thebibliography}{}

\bibitem[\protect\citeauthoryear{Arias}{Arias}{2016}]{arias-dc-iclp2016}
{\sc Arias, J.} 2016.
\newblock {T}abled {CLP} for {R}easoning over {S}tream {D}ata.
\newblock In {\em {Technical Communications of the 32nd Int'l.\ Conference on
  Logic Programming}}. Vol.~52. OASIcs, 1--8.
\newblock Doctoral Consortium.

\bibitem[\protect\citeauthoryear{Arias and Carro}{Arias and
  Carro}{2019a}]{TCLP-tplp2019}
{\sc Arias, J.} {\sc and} {\sc Carro, M.} 2019a.
\newblock {D}escription, {I}mplementation, and {E}valuation of a {G}eneric
  {D}esign for {T}abled {CLP}.
\newblock {\em Theory and Practice of Logic Programming\/}~{\em 19,\/}~3 (May),
  412--448.

\bibitem[\protect\citeauthoryear{Arias and Carro}{Arias and
  Carro}{2019b}]{arias19:ciaopp-tclp}
{\sc Arias, J.} {\sc and} {\sc Carro, M.} 2019b.
\newblock {Evaluation of the Implementation of an {A}bstract Interpretation
  Algorithm using Tabled CLP}.
\newblock {\em Theory and Practice of Logic Programming\/}~{\em 19,\/}~5-6
  (September), 1107--1123.
\newblock Special Issue on ICLP'19.

\bibitem[\protect\citeauthoryear{Arias and Carro}{Arias and
  Carro}{2019c}]{atclp-padl2019}
{\sc Arias, J.} {\sc and} {\sc Carro, M.} 2019c.
\newblock Incremental evaluation of lattice-based aggregates in logic
  programming using modular {TCLP}.
\newblock In {\em 21st Int'l. Symposium on Practical Aspects of Declarative
  Languages}, {J.~J. Alferes} {and} {M.~Johansson}, Eds. LNCS, vol. 11372.
  Springer, 98--114.

\bibitem[\protect\citeauthoryear{Charatonik, Mukhopadhyay, and
  Podelski}{Charatonik
  et~al\mbox{.}}{2002}]{Charatonik02:model_checking_tabulation}
{\sc Charatonik, W.}, {\sc Mukhopadhyay, S.}, {\sc and} {\sc Podelski, A.}
  2002.
\newblock {Constraint-Based Infinite Model Checking and Tabulation for
  Stratified CLP}.
\newblock In {\em ICLP'02}, {P.~J. Stuckey}, Ed. Lecture Notes in Computer
  Science, vol. 2401. Springer, 115--129.

\bibitem[\protect\citeauthoryear{{Chico~de~Guzm\'{a}n}, Carro, Hermenegildo,
  and Stuckey}{{Chico~de~Guzm\'{a}n}
  et~al\mbox{.}}{2012}]{chico-tclp-flops2012-large}
{\sc {Chico~de~Guzm\'{a}n}, P.}, {\sc Carro, M.}, {\sc Hermenegildo, M.~V.},
  {\sc and} {\sc Stuckey, P.} 2012.
\newblock {A} {G}eneral {I}mplementation {F}ramework for {T}abled {CLP}.
\newblock In {\em 15th Int'l.\ Symposium on Functional and Logic Programming},
  {T.~Schrijvers} {and} {P.~Thiemann}, Eds. LNCS, vol. 7294. Springer Verlag,
  104--119.

\bibitem[\protect\citeauthoryear{Cui and Warren}{Cui and
  Warren}{2000}]{bao2000}
{\sc Cui, B.} {\sc and} {\sc Warren, D.~S.} 2000.
\newblock {A} system for {T}abled {C}onstraint {L}ogic {P}rogramming.
\newblock In {\em Int'l.\ Conference on Computational Logic}. LNCS, vol. 1861.
  Springer, 478--492.

\bibitem[\protect\citeauthoryear{Dawson, Ramakrishnan, and Warren}{Dawson
  et~al\mbox{.}}{1996}]{Dawson:pldi96}
{\sc Dawson, S.}, {\sc Ramakrishnan, C.~R.}, {\sc and} {\sc Warren, D.~S.}
  1996.
\newblock Practical {P}rogram {A}nalysis {U}sing {G}eneral {P}urpose {L}ogic
  {P}rogramming {S}ystems -- {A} {C}ase {S}tudy.
\newblock In {\em Proceedings of the ACM SIGPLAN'96 Conference on Programming
  Language Design and Implementation}. ACM Press, New York, USA, 117--126.

\bibitem[\protect\citeauthoryear{Falaschi, Levi, Martelli, and
  Palamidessi}{Falaschi et~al\mbox{.}}{1989}]{falaschi-vars89}
{\sc Falaschi, M.}, {\sc Levi, G.}, {\sc Martelli, M.}, {\sc and} {\sc
  Palamidessi, C.} 1989.
\newblock Declarative {M}odeling of the {O}perational {B}ehaviour of {L}ogic
  {P}rograms.
\newblock {\em Theoretical Computer Science\/}~{\em 69}, 289--318.

\bibitem[\protect\citeauthoryear{Gabbrielli and Levi}{Gabbrielli and
  Levi}{1991}]{gabbrielli1991modeling}
{\sc Gabbrielli, M.} {\sc and} {\sc Levi, G.} 1991.
\newblock {Modeling Answer Constraints in Constraint Logic Programs}.
\newblock In {\em Proc. 8th Int'l Conference on Logic Programming}. 238--252.

\bibitem[\protect\citeauthoryear{Gange, Navas, Schachte, S{\o}ndergaard, and
  Stuckey}{Gange et~al\mbox{.}}{2013}]{navas2013-FTCLP}
{\sc Gange, G.}, {\sc Navas, J.~A.}, {\sc Schachte, P.}, {\sc S{\o}ndergaard,
  H.}, {\sc and} {\sc Stuckey, P.~J.} 2013.
\newblock {Failure Tabled Constraint Logic Programming by Interpolation}.
\newblock {\em {TPLP}\/}~{\em 13,\/}~4-5, 593--607.

\bibitem[\protect\citeauthoryear{Jaffar and Maher}{Jaffar and
  Maher}{1994}]{survey94}
{\sc Jaffar, J.} {\sc and} {\sc Maher, M.} 1994.
\newblock {C}onstraint {L}ogic {P}rogramming: {A} {S}urvey.
\newblock {\em Journal of Logic Programming\/}~{\em 19/20}, 503--581.

\bibitem[\protect\citeauthoryear{Jaffar, Santosa, and Voicu}{Jaffar
  et~al\mbox{.}}{2004}]{jaffar04:timed_automata}
{\sc Jaffar, J.}, {\sc Santosa, A.~E.}, {\sc and} {\sc Voicu, R.} 2004.
\newblock {A {CLP} Proof Method for Timed Automata}.
\newblock In {\em RTSS}. IEEE Computer Society, 175--186.

\bibitem[\protect\citeauthoryear{Janssens and Sagonas}{Janssens and
  Sagonas}{1998}]{janssens98:tabling_for_AI-tapd}
{\sc Janssens, G.} {\sc and} {\sc Sagonas, K.} 1998.
\newblock {O}n the {U}se of {T}abling for {A}bstract {I}nterpretation: {A}n
  {E}xperiment with {A}bstract {E}quation {S}ystems.
\newblock In {\em Tabulation in Parsing and Deduction}.

\bibitem[\protect\citeauthoryear{Kanamori and Kawamura}{Kanamori and
  Kawamura}{1993}]{kanamori93:absint-oldt}
{\sc Kanamori, T.} {\sc and} {\sc Kawamura, T.} 1993.
\newblock {Abstract Interpretation Based on OLDT Resolution}.
\newblock {\em Journal of Logic Programming\/}~{\em 15}, 1--30.

\bibitem[\protect\citeauthoryear{Ramakrishna, Ramakrishnan, Ramakrishnan,
  Smolka, Swift, and Warren}{Ramakrishna
  et~al\mbox{.}}{1997}]{ramakrishna97:model_checking_tabling}
{\sc Ramakrishna, Y.}, {\sc Ramakrishnan, C.}, {\sc Ramakrishnan, I.}, {\sc
  Smolka, S.}, {\sc Swift, T.}, {\sc and} {\sc Warren, D.} 1997.
\newblock {E}fficient {M}odel {C}hecking {U}sing {T}abled {R}esolution.
\newblock In {\em Computer Aided Verification}. LNCS, vol. 1254. Springer
  Verlag, 143--154.

\bibitem[\protect\citeauthoryear{Revesz}{Revesz}{1993}]{revesz1993closed}
{\sc Revesz, P.~Z.} 1993.
\newblock {A Closed-Form Evaluation for Datalog Queries with Integer
  (Gap)-Order Constraints}.
\newblock {\em Theoretical Computer Science\/}~{\em 116,\/}~1, 117--149.

\bibitem[\protect\citeauthoryear{Schrijvers, Demoen, and Warren}{Schrijvers
  et~al\mbox{.}}{2008}]{tchr2007}
{\sc Schrijvers, T.}, {\sc Demoen, B.}, {\sc and} {\sc Warren, D.~S.} 2008.
\newblock {TCHR}: a {F}ramework for {T}abled {CLP}.
\newblock {\em Theory and Practice of Logic Programming\/}~4 (Jul), 491--526.

\bibitem[\protect\citeauthoryear{Swift and Warren}{Swift and
  Warren}{2010}]{swift2010:subsumption}
{\sc Swift, T.} {\sc and} {\sc Warren, D.~S.} 2010.
\newblock Tabling with answer subsumption: Implementation, applications and
  performance.
\newblock In {\em Logics in Artificial Intelligence}. Vol. 6341. 300--312.

\bibitem[\protect\citeauthoryear{Tamaki and Sato}{Tamaki and
  Sato}{1986}]{tamaki.iclp86}
{\sc Tamaki, H.} {\sc and} {\sc Sato, M.} 1986.
\newblock {OLD} {R}esolution with {T}abulation.
\newblock In {\em Third International Conference on Logic Programming}. Lecture
  Notes in Computer Science, Springer-Verlag, London, 84--98.

\bibitem[\protect\citeauthoryear{Toman}{Toman}{1997}]{toman_theo_const_tabling}
{\sc Toman, D.} 1997.
\newblock {M}emoing {E}valuation for {C}onstraint {E}xtensions of {D}atalog.
\newblock {\em Constraints\/}~{\em 2,\/}~3/4, 337--359.

\bibitem[\protect\citeauthoryear{van Emden and Kowalski}{van Emden and
  Kowalski}{1976}]{EmKo76}
{\sc van Emden, M.~H.} {\sc and} {\sc Kowalski, R.~A.} 1976.
\newblock {T}he {S}emantics of {P}redicate {L}ogic as a {P}rogramming
  {L}anguage.
\newblock {\em Journal of the {ACM}\/}~{\em 23}, 733--742.

\bibitem[\protect\citeauthoryear{Warren}{Warren}{1992}]{Warren92}
{\sc Warren, D.~S.} 1992.
\newblock {M}emoing for {L}ogic {P}rograms.
\newblock {\em Communications of the ACM\/}~{\em 35,\/}~3, 93--111.

\bibitem[\protect\citeauthoryear{Warren, Hermenegildo, and Debray}{Warren
  et~al\mbox{.}}{1988}]{pracabsin}
{\sc Warren, R.}, {\sc Hermenegildo, M.}, {\sc and} {\sc Debray, S.~K.} 1988.
\newblock {O}n the {P}racticality of {G}lobal {F}low {A}nalysis of {L}ogic
  {P}rograms.
\newblock In {\em Fifth International Conference and Symposium on Logic
  Programming}. {MIT} Press, 684--699.

\bibitem[\protect\citeauthoryear{Zou, Finin, and Chen}{Zou
  et~al\mbox{.}}{2005}]{zou05:owl-tabling}
{\sc Zou, Y.}, {\sc Finin, T.}, {\sc and} {\sc Chen, H.} 2005.
\newblock {F-OWL}: {A}n {I}nference {E}ngine for {S}emantic {W}eb.
\newblock In {\em Formal Approaches to Agent-Based Systems}. LNCS, vol. 3228.
  Springer Verlag, 238--248.

\end{thebibliography}
\label{sec:bibliography}

\end{document}